\documentclass[aps,pra,twocolumn,showpacs,superscriptaddress, nofootinbib]{revtex4-2} 
\pdfoutput=1

\usepackage[utf8]{inputenc}
\usepackage[T1]{fontenc}
\usepackage{graphicx}

\usepackage{physics}
\usepackage{bbold}
\usepackage{amsmath, amsthm, amssymb,mathrsfs}

\newtheorem{theorem}{Theorem}

\usepackage{mathtools}

\usepackage[normalem]{ulem}
\usepackage{optidef}
\usepackage{algorithmic}
\usepackage{algorithm}

\usepackage{dsfont}
\usepackage{comment}
\usepackage{xcolor}
\usepackage{vwcol}
\usepackage[colorlinks]{hyperref}

\usepackage[acronym]{glossaries}
\makeglossaries
\newacronym{qkd}{QKD}{quantum key distribution}
\newacronym{di}{DI}{device-independent}
\newacronym{pm}{PM}{prepare-and-measure}
\newacronym{sdp}{SDP}{semidefinite programming}
\newacronym{POVM}{POVM}{Positive Operator Valued Measure}
\newacronym{usd}{USD}{Unambiguous state discrimination}
\newacronym{qber}{QBER}{Quantum Bit Error Rate}

\usepackage{float}
\usepackage[caption=false]{subfig}

\usepackage{tikz}
\usetikzlibrary{
  arrows.meta,
  automata
}
\tikzset{
    -Latex,auto,node distance =1 cm and 1 cm,semithick,
    state/.style ={ellipse, draw, minimum width = 0.7 cm},
    point/.style = {circle, draw, inner sep=0.04cm,fill,node contents={}},
    bidirected/.style={Latex-Latex,dashed},
    el/.style = {inner sep=2pt, align=left, sloped}
}

\usepackage{multirow}
\usepackage{diagbox}

\usepackage{xcolor}

\hyphenation{ALPGEN}
\hyphenation{EVTGEN}
\hyphenation{PYTHIA}

\usepackage[normalem]{ulem}
\newcommand{\stkout}[1]{\ifmmode\text{\sout{\ensuremath{#1}}}\else\sout{#1}\fi}

\begin{document}

\title{Towards a minimal example of quantum nonlocality without inputs}

\author{Sadra Boreiri}
\thanks{These authors contributed equally to this work}
\affiliation{Department of Applied Physics University of Geneva, 1211 Geneva, Switzerland}
\author{Antoine Girardin}
\thanks{These authors contributed equally to this work}
\affiliation{Department of Applied Physics University of Geneva, 1211 Geneva, Switzerland}
\author{Bora Ulu}
\affiliation{Department of Applied Physics University of Geneva, 1211 Geneva, Switzerland}
\author{Patryk Lipka-Bartosik}
\affiliation{Department of Applied Physics University of Geneva, 1211 Geneva, Switzerland}
\author{Nicolas Brunner}
\affiliation{Department of Applied Physics University of Geneva, 1211 Geneva, Switzerland}
\author{Pavel Sekatski}
\affiliation{Department of Applied Physics University of Geneva, 1211 Geneva, Switzerland}

\begin{abstract}
The network scenario offers interesting new perspectives on the phenomenon of quantum nonlocality. Notably, when considering networks with independent sources, it is possible to demonstrate quantum nonlocality without the need for measurements inputs, i.e. with all parties performing a fixed quantum measurement. Here we aim to find minimal examples of this effect. Focusing on the minimal case of the triangle network, we present examples involving output cardinalities of $3-3-3$ and $3-3-2$. A key element is a rigidity result for the Parity Token Counting distribution, which represents a minimal example of rigidity for a classical distribution. Finally, we discuss the prospects of finding an example of quantum nonlocality in the triangle network with binary outputs and point out a connection to the Lovasz local lemma. 
\end{abstract}

\maketitle

\section{Introduction}
The exploration of quantum nonlocality in networks has attracted growing attention in recent years; see e.g. \cite{Tavakoli_Review} for a recent review. This avenue of research opens interesting new perspectives and possibilities for quantum nonlocal correlations.

While quantum nonlocality has been investigated in a broad range of scenarios, including the multipartite case, the network scenario brings a main conceptual novelty. Specifically, the central idea is to consider networks where several sources distribute quantum resources to various subsets of the parties (nodes). In this sense, this model differs from the standard approach to multipartite nonlocality, where all parties are connected via a common source; see e.g. \cite{review}. The main assumption in the network scenario is then to consider that each source in the network is independent of all others \cite{Branciard_2010,Branciard_2012,Fritz_2012}. From a more formal point of view, this implies that the relevant sets of possible correlations are non-convex, as mixing two arbitrary strategies requires a source of randomness common to all parties. Hence, the standard methods aiming at the detection of quantum nonlocality with linear Bell tests are typically useless when discussing networks, and radically novel methods and concepts must be developed, see e.g. \cite{wolfe2019inflation,MiguelElie,wolfe2021inflation,Aberg2020,Weilenmann_2018,Henson_2014,renou2019limits,gisin2020constraints}. 

Despite these challenges, recent works have brought significant insight into quantum nonlocality in networks. Notably, the few known examples of quantum nonlocal distributions show that the network scenario allows for novel forms of quantum nonlocality, that are possible only due to the network structure (in particular due to the independence  of the sources). Remarkably, quantum nonlocality can be demonstrated in a network where none of the parties receive an input \cite{Fritz_2012,Branciard_2012}, that is all the observers perform a single (fixed) measurement. This is in sharp contrast to the standard Bell scenario, where the presence of measurement inputs is fundamental. More specifically, in a scenario without inputs, the set of correlations obtainable with a classical source common to all parties coincides with the whole set of valid probability distributions. Consequently, there is no nonlocality without inputs in the standard Bell scenario.

A notable example of this phenomenon termed ``quantum nonlocality without inputs'' has been provided by Fritz in the triangle network \cite{Fritz_2012}, as shown in Fig. \ref{fig:triangle}. While the example of Fritz can be viewed as an embedding of the well-known CHSH Bell test in the triangle network (see also \cite{Fraser}), more recent work by Renou et al. \cite{Renou_2019}, followed by other examples \cite{Gisin_2019,Renou2022,Abiuso2022,Pozas2022}, suggested that a different form of quantum nonlocality without inputs can be also observed. Here, no obvious connection to standard Bell nonlocality can be made, suggesting a form of quantum nonlocality genuine to this network structure, as further examined in Ref. \cite{Supic2022}.

\begin{figure}[t]
\centering
\includegraphics[width=0.7\columnwidth]{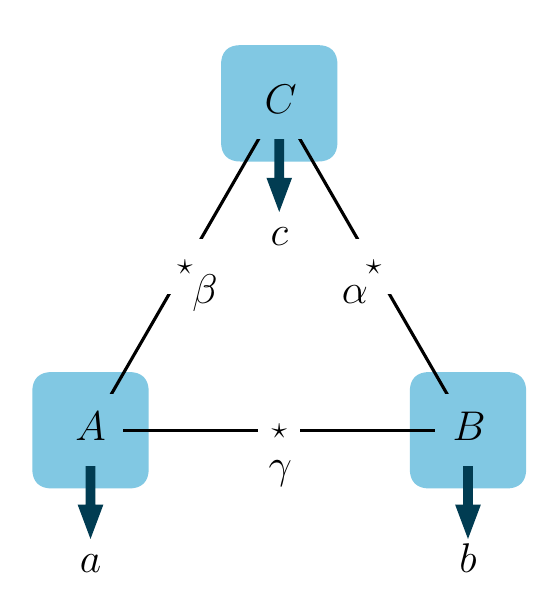}
\caption{The triangle network features three distant parties. Each pair of parties is connected via a bipartite source. Importantly here, all parties perform a fixed measurement, i.e. they receive no input. In this work, we look for a minimal example (in terms of output cardinality) of quantum nonlocality.}
\label{fig:triangle}
\end{figure}

In this work, we investigate the question of finding a minimal example of quantum  nonlocality without inputs. In the study of Bell nonlocality, the notion of minimal usually relates to the number of measurement inputs and outputs of the parties. In particular, the minimal scenario is that of the CHSH Bell inequality \cite{CHSH} where two parties perform two binary measurements each. It has been largely investigated and enables most of the applications of quantum nonlocality, such as device-independent quantum information processing, and most of the experimental demonstrations of quantum nonlocality.

In the context of networks, the notion of minimality can be defined in different ways. First, we aim to minimize the number of parties. Hence we focus on networks with three parties. Among these, the only network that allows for nonlocality without inputs is the triangle network\footnote{Note that the other non-trivial network with three parties, the so-called bilocality network, does not allow for nonlocality without inputs; any distribution compatible with the bilocality network admits a local model~\cite{Branciard_2012}} (see Fig. \ref{fig:triangle}). Next, as they are no measurement inputs, we want to minimize the number of measurement outputs. Previous examples of quantum nonlocality without inputs in the triangle network feature four-valued outputs. We use the notation $4-4-4$ to denote the cardinality of such a scenario. To the best of our knowledge, the only improvement reported so far is a variant of the Fritz example given in Ref. \cite{Weilenmann_2018}, which features output cardinalities of $4-4-2$. 

Here we report on several examples of quantum nonlocality in the triangle network with output cardinalities $3-3-3$ and $3-3-2$. These examples are constructed from coarse-graining of previous known examples. We provide analytic proofs of the nonlocality of these behaviors, and study their robustness to noise via numerical methods based on neural networks \cite{krivachy_neural_2020}. In particular, to prove the nonlocality of the $3-3-3$ example we derive a rigidity result for network local models that we call Parity Token Counting. We believe this is of independent interest, as it represents the minimal example (with cardinalities $2-2-2$) of rigidity of a classical distribution. Finally, we discuss the prospects of finding an example of quantum nonlocality in the simplest triangle network with binary outputs.

\section{Problem and methods}

The triangle network consists of three parties (Fig~ \ref{fig:triangle}), namely Alice, Bob, and Charlie. Each pair of parties is connected by a bipartite source. Importantly, the three sources are assumed to be independent of each other. Formally, independent sources prepare mutually independent random variables in the classical case and product quantum states in the quantum case. Importantly, the three parties do not have access to any common source. Upon receiving the physical resources the observers produce outputs ($a$, $b$ and $c$). 
In contrast to standard tests of Bell nonlocality, the observers receive no input in this setting. The statistics of the outputs are therefore given by the joint probability distribution $P (a, b, c)$. 
Despite the simplicity of the triangle network, finding quantum distributions that are provably nonlocal is a non-trivial task.

One of the central challenges in the study of network nonlocality is to find efficient methods for characterizing the set of distributions $P(a,b,c)$ which are obtainable from different physical resources. In particular, the local set $\mathcal{L}$ consists of all distributions of the following form
\begin{align} \label{trilocal}
P_L(a,b,c)=
\int &\dd\alpha\,   \dd\beta \, \dd \gamma    P_A(a|\beta, \gamma) \, P_B(b|\gamma,\alpha) \, P_C(c|\alpha,\beta),
\end{align}
where $\alpha, \beta$,$\gamma$ are the local variables  distributed by the sources (here $\dd \alpha \equiv \dd \mu(\alpha)$ means that the local variables are sampled from some underlying distributions $\mu(\alpha)$), and $P
_A(a|\beta, \gamma),$ $P_B(b|\gamma,\alpha)$ and $P_C(c|\alpha,\beta)$ are the response functions of Alice, Bob, and Charlie, respectively. To prove that a distribution $P_Q(a,b,c)$ is not ''triangle-local'', we thus need to demonstrate that it cannot be generated in the triangle network with classical sources.
Formally, we need to show that $P_Q(a,b,c)\notin \mathcal{L}
$.

The fundamental difficulty in verifying the existence of a local model according to Eq. (\ref{trilocal}) arises from the independence of the sources. Due to this independence, the local set $\mathcal{L}$ is non-convex\footnote{For a concrete example, consider the distribution where all parties always output ''0'', $P(000)=1$, which is clearly triangle-local. Similarly, the distribution where all parties output ''1'', $P(111)=1$, is triangle-local. However, the mixture of the two distributions violates the no-signaling condition~\cite{wolfe2019inflation}, and is therefore not achievable in the triangle network.}, 
and the standard approach of characterizing the local set (a convex polytope) with linear Bell inequalities (facets of the polytope) cannot be applied. As a result, efficient bounds on the set of classical correlations (for example in the form of nonlinear Bell inequalities) are still missing. Due to the lack of general tools, the known results on quantum network nonlocality without inputs are based on different arguments.

The example of Fritz is constructed by embedding a bipartite CHSH Bell test in the triangle network. It only involves a single quantum source, while the other two sources distribute the "inputs" for the measurement of the entangled quantum state. The nonlocality of the resulting distribution essentially follows from the violation of the CHSH inequality (or any other bipartite Bell inequality)~\cite{Fritz_2012,Fraser}.
 
The example of Renou et al.~\cite{Renou_2019} which we refer to as the RGB4 distribution involves three quantum sources and entangled measurements. The proof of nonlocality of the RGB4 distribution relies on the concept of Token Counting rigidity \cite{Renou2022,renou2022network}. It imposes severe constraints on the underlying classical model for certain triangle-local distributions, as discussed in the next section.
It is also worth mentioning the construction of Gisin in Ref.~\cite{Gisin_2019} which is conjectured to be nonlocal based on numerical evidence~\cite{krivachy_neural_2020}, but for which a proof is still missing.

In this work, we look for minimal examples of quantum nonlocality in the triangle network. More specifically, we present examples of quantum distributions with low output cardinality and show that they do not admit a triangle-local model of Eq.~\eqref{trilocal} using a variety of methods. For our first example in Section III, with output cardinalities $3-3-3$, we provide an analytic proof based on a relaxed version of the rigidity property, which we term ``almost-rigidity''. Our second example in Section IV, with output cardinalities $3-3-2$, is proven via the technique of inflation \cite{wolfe2019inflation}. Loosely speaking, this method considers larger (inflated) networks, some marginals of which coincide with the original triangle network. If one can show that the distribution on the inflated network cannot be consistently defined, given the marginals and some independence constraints, one can conclude that the original distribution is not triangle-local. This technique also allows us to derive nonlinear Bell inequalities for detecting the corresponding distribution. More details can be found in Section IV and in Appendix \ref{appendix_inflation}.

Finally, in order to investigate further examples of quantum nonlocality in the triangle scenario, we use numerical methods based on machine learning. Concretely, we use a method developed in Ref. \cite{krivachy_neural_2020} based on neural networks (NN). The code used in this work is adapted from the code available in this last reference. Here, a generative neural network (which we refer to as "LHV-Net") is used to explore the space of triangle-local models. This is ensured by encoding the structure of the triangle network into the neural network. Given a target distribution $P_{target}(a,b,c) $, the LHV-Net aims at constructing a local distribution $P_{NN}(a,b,c)$ which is as close as possible to the target one. This is done by minimizing the loss function, which in most cases is the Euclidean distance  \begin{equation}
\begin{aligned}
    \label{distance_formula}
     d (P_{target},& P_{NN}) = \\ &\sqrt{\sum_{a,b,c} \left[P_{target}(a,b,c)-P_{NN}(a,b,c) \right]^2}.
\end{aligned}
\end{equation}

For a given target distribution, it is often useful to investigate a family of distributions obtained by adding noise to the initial one. This can be done by mixing the target distribution with another distribution that is triangle-local (for example by adding noise to the quantum model). The visibility parameter $V$ controls the amount of noise, with $V=1$ corresponding the initial distribution. If the initial distribution $P_{target}$ is indeed nonlocal, at some value $V^*$ we expect to see a sharp transition when monitoring the distance $d(\cdot, \cdot)$ as a function of $V$. The critical visibility $V^*$ then corresponds to the point where the noisy distribution becomes triangle-local and gives an estimate of the noise robustness of the initial distribution.

\section{Quantum Nonlocality with output cardinality 3-3-3}

We start by presenting an example of quantum nonlocality in the triangle network with output cardinalities $3-3-3$. This example is based on a coarse-graining of the RGB4 family of distributions (with biased tokens/sources) of Ref. \cite{Renou_2019}, which has cardinalities $4-4-4$. In order to demonstrate the nonlocality of the coarse-grained distribution, we first present a result on ``almost-rigidity'' for the task of Parity Token Counting. Then we give the target quantum distribution and use the almost-rigidity property to prove that it is not triangle-local.

\subsection{Almost-rigidity of Parity Token Counting distributions}

In the present context, rigidity~\cite{Renou2022,renou2022network} is a property of some classes of classical distributions in a network. The property ensures that any model (i.e. local variables and response functions displayed on the r.h.s of Eq.~\eqref{trilocal}) underlying a rigid probability distribution $P_L(a,b,c)$ can be brought to a unique canonical form using local relabeling of the variables $\alpha, \beta, \gamma$ (to a discrete set of values) which do not conflict with the response function. To be more precise, we now  elaborate on the general form of triangle-local models in Eq.~\eqref{trilocal} and Fig.~\ref{fig:triangle}. 

To fix the notation we say that each source samples local variables $\alpha, \beta$ or $\gamma$ according to distributions $\mu(\alpha), \eta(\beta)$ and $\nu(\gamma)$. Note that without loss of generality, we assumed that each source distributes two copies of the same variable\footnote{One can also consider sources sampling two variables that may take different values and sending them to different parties, e.g. $(\alpha_B,\alpha_C)$ with $\alpha_B$ sent to Bob and $\alpha_C$ to Charlie. This case is however included in the previous one, as the source could alternatively distribute two copies of pair $\alpha=(\alpha_B,\alpha_C)$ to both parties while the response functions ignore the respective half of the pair.}. Moreover, without loss of generality, the response functions can be considered deterministic, i.e. $P_A(a|\beta,\gamma)=0$ or $1$, because any randomness required for the choice of the outputs (local to the party) can be delegated to one of the neighboring sources. Hence, without loss of generality, we use the three distributions of local variables $\mu(\alpha),\eta(\beta),\nu(\gamma)$ and the three deterministic response functions $a(\beta,\gamma),b(\alpha,\gamma), c(\alpha,\beta)$ to describe a generic model underlying triangle-local distributions $P_L(a,b,c)$.

Let us now construct a family of triangle-local models that we call Parity Token Counting (PTC) strategies. Each source has a single token that can be sent to either one or the other party it connects to. For example, the source connecting Bob and Charlie prepares $(\alpha_B,\alpha_C)= (1,0)$ with probability $p_\alpha$ and $(\alpha_B,\alpha_C)=(0,1)$ with probability $1-p_\alpha$. Each party then outputs the parity of the total number of tokens it receives, i.e. for Charlie $c(\alpha_C,\beta_C)= \alpha_C \oplus \beta_C$, where ``$\oplus$'' is understood as sum modulo two, so that all the outputs are binary. All possible PTC strategies we just described give rise to distributions $P_{PTC}(a,b,c)$ that live in a three-dimensional subset $\mathcal{L}_{PTC}$ of the local set for the $2-2-2$ triangle. The set $\mathcal{L}_{PTC}$ can be parameterized by $(p_\alpha, p_\beta,p_\gamma)$, however, the strategies $(p_\alpha, p_\beta,p_\gamma)$ and $(1-p_\alpha, 1- p_\beta, 1- p_\gamma)$ are related by flipping the values of all local variable and thus lead to the same distribution. Except for a subset of $\mathcal{L}_{PTC}$ of measure zero these two strategies are the only ones that simulate the distribution. The full characterization of the set $\mathcal{L}_{PTC}$, including the multiplicities of the PTC strategies underlying each distribution $P_{PTC}(a,b,c)$, can be found in Appendix.\ref{Appendix_PRCdistributions}.

We now ask what possible models lead to a distribution $P_{PTC}(a,b,c)$, except for the strategies we just defined. In fact, we will now show that PTC strategies are essentially the only classical  strategies that lead to a distribution $P_{PTC}(a,b,c)$. This is formalized by the following theorem.
\begin{theorem}
 Let $P_{PTC}(a, b, c)$ be the distribution arising from Parity Token Parity Counting strategy $(p_\alpha, p_\beta, p_\gamma)$ on the triangle network, where each source distributes its only token to the connected parties with probabilities $p_i$ and $1-p_i$, and the parties output the parity of the number of received tokens. For any other strategy $\big(\mu(\alpha),\eta(\beta),\nu(\gamma),a(\beta,\gamma),b(\alpha,\gamma), c(\alpha,\beta) \big)$ that achieves the distribution $P_{PTC}(a, b, c)$ there exist functions $T_i^j: \mathcal S_i\to \{0,1\}$ ($\mathcal S_i$ is the set of all possible values of the local variable produced by the source $S_i$) for any $S_i\to A_j$ such that 
\begin{enumerate}
\item[{\rm (i)}] 
\resizebox{0.9\hsize}{!}{
$T_{\alpha}^b(\alpha)+T_{\alpha}^c(\alpha) = T_{\beta}^c(\beta)+T_{\beta}^a(\beta) = T_{\gamma}^a(\gamma)+T_{\gamma}^b(\gamma) = 1.$}
\item[{\rm (ii)}] $a (\beta, \gamma) =T_{\beta}^a(\beta) \oplus T_{\gamma}^a(\gamma),\\
b (\gamma , \alpha) = T_{\gamma}^b(\gamma) \oplus T_{\alpha}^b(\alpha),\\
c (\alpha , \beta) =T_{\alpha}^c(\alpha) \oplus T_{\beta}^c(\beta).$
\end{enumerate}
In addition, if the distribution is such that $P(a=1)$, $P(b=1)$, $P(c=1)\neq \frac{1}{2}$, the functions $T_{i}^j$ can be chosen to fulfill
\begin{itemize}
    \item[{\rm (iii)}]
$\left(
\mathds{E}[T_{\alpha}^b],
\mathds{E}[T_{\beta}^c],
\mathds{E}[T_{\gamma}^a]
\right) =
\left(
p_\alpha, 
p_\beta, 
p_\gamma
\right)$
\end{itemize}
where $\mathds{E}[T_{\alpha}^b] = \int  \dd\alpha \,  T_{\alpha}^b(\alpha)$.
\end{theorem}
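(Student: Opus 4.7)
The plan is to exploit the algebraic signature of PTC distributions: $P_{PTC}$ is supported on $\{a \oplus b \oplus c = 1\}$, since each source contributes exactly one token and $1{+}1{+}1 \equiv 1 \pmod 2$. Any strategy reproducing $P_{PTC}(a,b,c)$ must therefore satisfy $a(\beta,\gamma) \oplus b(\gamma,\alpha) \oplus c(\alpha,\beta) = 1$ for $\mu \otimes \eta \otimes \nu$-almost every $(\alpha,\beta,\gamma)$, and after discarding a null set this identity can be assumed pointwise on the support.

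My first step would be to convert this single constraint into the XOR decomposition~(ii). Rewriting it as $a(\beta,\gamma) \oplus c(\alpha,\beta) = 1 \oplus b(\gamma,\alpha)$, the right-hand side is independent of $\beta$; picking $\beta_1,\beta_2$ in the support of $\eta$ and subtracting gives $a(\beta_1,\gamma) \oplus a(\beta_2,\gamma) = c(\alpha,\beta_1) \oplus c(\alpha,\beta_2)$, whose two sides depend only on $(\beta_1,\beta_2)$. Fixing a reference $(\beta_0,\gamma_0)$ and setting $T^a_\beta(\beta) := a(\beta,\gamma_0) \oplus a(\beta_0,\gamma_0)$ and $T^a_\gamma(\gamma) := a(\beta_0,\gamma)$ yields $a = T^a_\beta \oplus T^a_\gamma$; analogous definitions produce the $T$'s for $b$ and $c$. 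Substituting back into $a \oplus b \oplus c = 1$ and grouping by source variable gives
\[
(T^b_\alpha \oplus T^c_\alpha)(\alpha) \oplus (T^a_\beta \oplus T^c_\beta)(\beta) \oplus (T^a_\gamma \oplus T^b_\gamma)(\gamma) = 1;
\]
by independence of the three sources each bracketed term must be an almost-sure constant and the three constants XOR to~$1$. The decomposition carries a natural $\mathbb{Z}_2^3$ gauge — for each party, flipping both of its $T$'s simultaneously leaves the response invariant — whose induced action on the triple of constants covers all four odd-weight configurations. I would use this freedom to gauge-fix them to $(1,1,1)$, which is (i).

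For (iii), I would set $x := \mathds{E}[T^b_\alpha]$, $y := \mathds{E}[T^c_\beta]$, $z := \mathds{E}[T^a_\gamma]$. Using (i) to write $\mathds{E}[T^c_\alpha] = 1-x$ etc.\ and the independence of the sources, the single-party marginals evaluate to $2P(a{=}1) - 1 = (1-2y)(1-2z)$ and cyclically. Under the hypothesis that none of $P(a{=}1), P(b{=}1), P(c{=}1)$ equals $\tfrac12$ all three right-hand sides are nonzero; multiplying the three equations yields $(1-2x)(1-2y)(1-2z) = \pm \sqrt{(2P_a{-}1)(2P_b{-}1)(2P_c{-}1)}$, and the system has exactly two solutions related by the global involution $(x,y,z) \leftrightarrow (1-x,1-y,1-z)$. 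Since $(p_\alpha, p_\beta, p_\gamma)$ satisfies the same system by construction, these two solutions are precisely it and its complement. The residual gauge from the previous step (the triple toggle flipping all six $T$'s) preserves both the responses and the constants $s_i = 1$ while sending $(x,y,z)$ to its complement, so I can pick the branch matching $(p_\alpha, p_\beta, p_\gamma)$.

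The hardest part of the argument will be the last step: verifying that the polynomial system really has only the two advertised solutions and tracking the residual gauge carefully enough to extract the right one. The non-degeneracy hypothesis is tight — for example if $p_\beta = 1/2$ then $P(a{=}1) = P(c{=}1) = 1/2$, two of the marginal equations collapse, and uniqueness up to global flip is lost, mirroring the fact that inequivalent local models can reproduce such degenerate distributions.
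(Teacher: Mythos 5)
Your proposal is correct and follows essentially the same route as the paper's: both hinge on the support constraint $a \oplus b \oplus c = 1$, split each response into an XOR of single-variable functions by differencing against a fixed reference point in the support, and invoke the two-solution (up-to-global-flip) structure of the single-party correlator equations for (iii). The only cosmetic differences are that the paper's $\nabla_{\alpha_0}$ construction bakes condition (i) directly into the choice of $T_i^j$ (whereas you define naive compressions and then gauge-fix the odd-weight constant triple to $(1,1,1)$ using the per-party toggle), and the paper packages the uniqueness argument needed for (iii) as a separate characterization of $\mathcal{L}_{PTC}$ (its Theorem 3 in Appendix A) rather than re-deriving the polynomial system inline as you do.
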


The proof is given in Appendix \ref{Appendix_SemiRig}. We now discuss how to interpret it.
First, note that the six local compression functions $T_{i}^j$ map the local variable at each output of each source into a bit, e.g 
\begin{equation}
    \begin{split}
        (T_\alpha^b,T_\alpha^c):\mathcal{S}_\alpha &\to \{0,1\}^{\times 2}\\
        \alpha &\mapsto (T_\alpha^b(\alpha),T_\alpha^c(\alpha)),
    \end{split}
\end{equation}
for the two outputs of the source $S_\alpha$.
Furthermore, the condition (i) guarantees that for any $\alpha$ the only possible values are  $(\alpha_B,\alpha_C)=(T_\alpha^b(\alpha),T_\alpha^c(\alpha))= (1,0)$ or $(0,1)$, so that after the application of the compression functions to the outputs of a source it becomes a PTC source. On top of this, condition (ii) guarantees that the response functions of the original strategy are consistent with such compression, e.g. $c (\alpha , \beta) = \alpha_C\oplus \beta_C = T_{\alpha}^c(\alpha) \oplus T_{\beta}^c(\beta)$. That is, the response functions can  only depend on the local variables via the values of the bits after compression. In other words, any information encoded in $\alpha$ other than the single bit given by $(T_\alpha^b(\alpha),T_\alpha^c(\alpha))$ is ignored by the response functions.

Finally (iii) guarantees that in most cases there is a unique  PTC strategy (up to the choice of the functions $T_i^j$) underlying each PTC distribution. Nevertheless, for the following discussion, we will be interested in distributions with unbiased outputs for which (iii) does not apply. The cases where at least one output is unbiased (a subset of $\mathcal{L}_{PTC}$ of measure zero) are degenerate, i.e. there are infinitely many PTC strategies that lead to say same distribution, see Appendix~\ref{Appendix_PRCdistributions}. Since (iii) does not hold for all distributions we refer to Theorem 1 as almost-rigidity.

Two things are worth mentioning before moving on. First, the almost-rigidity property of PTC distributions holds for all networks satisfying the No Double Common Source (NDCS) condition~\cite{Sadrainprep}. Second, PTC rigidity works for binary outputs, which is in contrast to Token Counting and Color Matching distributions requiring at least ternary outputs~\cite{renou2022network}. So in some sense, it provides the minimal example of network rigidity. This is also why it can be used to detect nonlocality in the $3-3-3$ triangle as we now show.

\subsection{The quantum strategy}

Consider each source distributing an entangled two-qubit state 

\begin{equation}
\ket{\psi} = \lambda_0 \ket{01} + \lambda_1\ket{10}, \label{state_RGB4}
\end{equation}

with positive real coefficients $\lambda_0, \lambda_1$ where $\lambda_0^2+\lambda_1^2=1$.
Each party performs the following two-qubit joint measurement with ternary outputs $a,b,c\,{\in}\,\{\bar{0}, \bar{1}_0, \bar{1}_1 \}$.
\begin{equation} \label{meas_RGB4}
  \begin{split}
       \bar{0} &: \ketbra{00}{00} + \ketbra{11}{11} \\
       \bar{1}_0 &: \ketbra{\bar{1}_0}{\bar{1}_0} \\
       \bar{1}_1 &: \ketbra{\bar{1}_1}{\bar{1}_1} \\
  \end{split}
\end{equation}
where we define, $\ket{\bar{1}_0} =  u\ket{01}+v\ket{10},~ \ket{\bar{1}_1} = v\ket{01}-u\ket{10}$, with $0<v<u<1$ real parameters that satisfy $u^2\,{+}\,v^2\,{=}\,1$. This strategy is similar to the RGB4 construction \cite{Renou_2019}, with the difference that the POVM elements $\ketbra{00}{00}$ and $\ketbra{11}{11}$ are here coarse-grained into a unique output $\bar{0}$. 

Fixing $\lambda_0^2=\frac{1}{3}$ and combining the above states and measurements, we obtain a quantum distribution denoted $P_Q^{333}(a,b,c)$ and given in Appendix.~\ref{Appendix_RGB4_proof}, for which we can prove the following.

\begin{theorem}\label{theorem_parity_qubit}
\label{theorem2}
The quantum distribution $P_Q^{333}(a,b,c)$ is not triangle-local (incompatible with any model of the form ~\eqref{trilocal}) for 
$\frac{2}{3}< u^2<1$.
\end{theorem}

Below we sketch the proof of this Theorem, while all details are given in Appendix \ref{Appendix_RGB4_proof}, where we identify a range of parameters $\lambda_0^2$ and $u$ that lead to triangle-nonlocal distributions (See Fig.~\ref{function} in the appendix).  The structure of the proof is similar to Ref. \cite{Renou_2019}, but it is based on the novel PTC rigidity result which is essential to make the proof work for ternary outputs. First, we observe that if the outputs $\bar{1}_0$ and $\bar{1}_1$ are coarse-grained for all parties the distribution resulting from $P^{333}_Q(a,b,c)$ is triangle-local and PTC. The rigidity of PTC distributions (Theorem 1) imposes severe constraints on the underlying classical models. Now, it can be shown that if the distribution $P^{333}_Q(a,b,c)$ was simulated by a classical model (before coarse-graining) these constraints would be impossible to satisfy. Therefore $P^{333}_Q(a,b,c)$ is necessarily not triangle-local.

\subsection{Noise robustness}

The above result applies to the (noiseless) distribution $P_Q^{333}(a,b,c)$, but does not extend to the case where noise is added, as the coarse-grained version of the noisy distribution is no longer PTC in general. Thus, to investigate noise robustness we must resort to other methods.

A first possibility would be to use the inflation method. Unfortunately, using it we could not prove the nonlocality of the above $3-3-3$ quantum distribution. Note that the inflation method is able to detect nonlocality of the original RGB4 distribution, albeit with an extremely small noise tolerance~\footnote{Elie Wolfe, private communication}.

Instead, we move to numerical methods and use LHV-Net. For the noise model, we add white noise to the entangled states of Eq.~\eqref{state_RGB4} produced by each source, which now prepares Werner states
\begin{align} \label{werner}
\rho_V = V \ket{\psi^+}\bra{\psi^+} + (1-V) \openone/4.
\end{align}
Leaving the measurement in Eq.~\eqref{meas_RGB4} unchanged, we obtain a family of distributions $P_{Q|V}^{333}$.  

In Fig. \ref{fig:RGB4_plots}, we plot the minimal distance (see Eq. (\ref{distance_formula})) found by LHV-Net, when taking $P_{Q|V}^{333}$ for the target distribution. We observe that, as $V$ decreases, the minimal distance undergoes a transition around the critical visibility $V^* \approx 0.99$, demonstrating a small robustness to noise. Note that, for the original RGB4 distribution (with four-valued outputs), the estimated noise tolerance (also via LHV-Net) is much larger \cite{krivachy_neural_2020}. Hence it seems that the coarse-graining severely decreases the robustness to noise in this case.

\begin{figure}[t]
\centering
\includegraphics[width=\columnwidth]{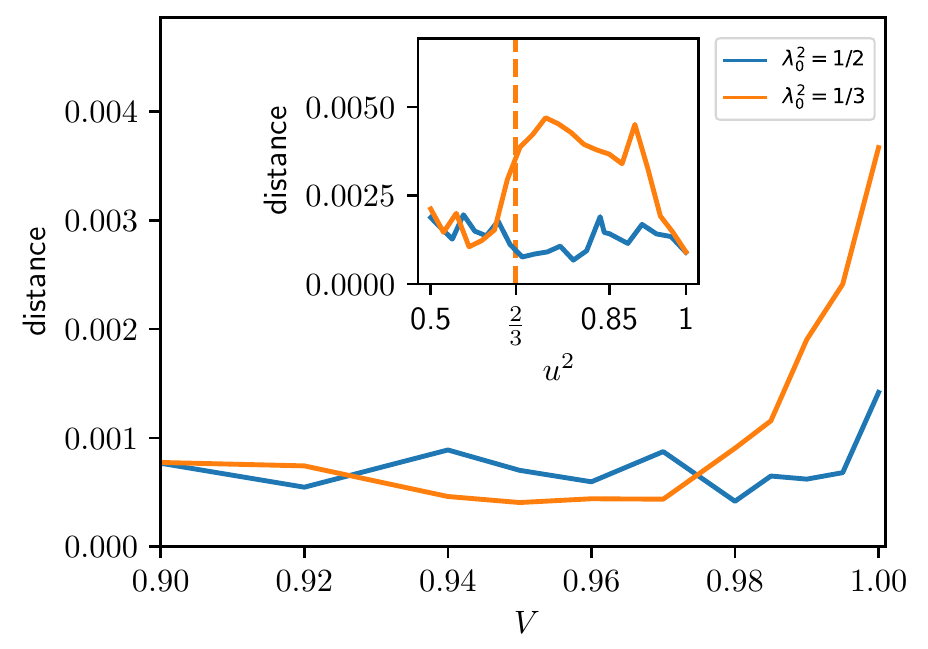}
\caption{Noise robustness of the $3-3-3$ coarse-grained RGB4 quantum distribution. The plot shows the minimal distance $d(P_{target}, P_{NN})$ found by the neural network (LHV-Net) between the target distribution and the closest local distribution, for different levels of noise, quantified by the visibility parameter $V$. Here, the measurements are given by setting $u^2 = 0.85$, and we consider two values for $\lambda_0$, the orange curve is for $\lambda_0=1/3$, and the blue curve for $\lambda_0=1/2$ for which our proof technique stays silent about the nonlocality of the distribution.  The inset demonstrates the distance for different values of the measurement parameter $u^2$, and the vertical line at $2/3$ represents the critical point for the orange curve for which we have proven nonlocality for $2/3<u^2<1$. We can see from this plot that the resulting distribution with $\lambda_0=1/2$ is  very close to the local set for all values of $u^2$. However, the distribution for $\lambda_0=1/3$ is much further from the local set and seems to be noise robust up to visibility $V=0.97$}
\label{fig:RGB4_plots}
\end{figure}

Moreover, LHV-Net also allows us to investigate different coarse-grainings of the RGB4 distribution. In particular, we find that when combining the outputs in the following way: $0: \ketbra{00}{00} + \ketbra{\bar{1}_0}{\bar{1}_0}$, $1: \ketbra{11}{11}$ and $2: \ketbra{\bar{1}_1}{\bar{1}_1} $, the resulting $3-3-3$ distribution also appears to be nonlocal. The noise robustness is similar to the previous case. It is also worth mentioning that the coarse-graining $0: \ketbra{00}{00}$, $1: \ketbra{\bar{1}_0}{\bar{1}_0} + \ketbra{\bar{1}_1}{\bar{1}_1} $ and $2: \ketbra{11}{11}$ results in product measurements and gives a classical distribution.

Finally, we also investigated binary coarse-grainings of the RGB4 distribution, but according to LHV-Net, all resulting distributions appear to be triangle local. Of course, this is only a numerical result and it should be considered carefully. It could be in principle that the resulting distribution is nonlocal but extremely close to the set of triangle-local distributions.

\subsection{Elegant distribution}

Using LHV-Net, we also investigated coarse-graining of the so-called ``Elegant distribution'' proposed by Gisin~\cite{Gisin_2019}. The distribution is based on all sources producing a Bell states, here chosen to be the singlet state $\ket{\psi^-}=1/\sqrt{2}(\ket{01}-\ket{10})$. Each party performs an entangling measurement with the four outcomes corresponding to projectors $\{\ketbra{\Phi_1}{\Phi_1}, \dots, \ketbra{\Phi_4}{\Phi_4}\}$ on two-qubit states 
\begin{equation}
\ket{\Phi_j}=\sqrt{\frac{3}{2}}\ket{m_j, -m_j} + i \frac{\sqrt{3}-1}{2}\ket{\psi^-}, 
\label{elegant_measurement}
\end{equation}
with the four vectors $\{\ket{m_j}\}_{j=1,2,3,4}$ forming a tetrahedron on the Bloch sphere. A notable difference with the RGB4 measurement in Eq.~\eqref{meas_RGB4}, is that here the states $\ket{\Phi_j}$ corresponding to all measurement outcomes are entangled, with the same level of entanglement (they are all related by local unitary transformations). The resulting distribution is symmetric under permutation of outputs and parties and is given by 
\begin{equation}
\begin{split}
P(a=b=c)&=\frac{25}{256}\\
P(a=b\neq c)&=\frac{1}{256}\\
P(a\neq b\neq c \neq a) &= \frac{5}{256} \,.
\end{split}
\label{def_elegant}
\end{equation} 
There is currently no known proof that the elegant distribution is not triangle-local. 

 Given the level of symmetry of this distribution, it is enough to consider a single coarse-graining to a $3-3-3$ distribution, with the constraint that each party performs the same coarse-graining. Here LHV-Net also suggests that the resulting distribution remains nonlocal, as shown in Fig. \ref{fig:Elegant_plots}. Again, it seems that the noise tolerance is reduced compared to the original $4-4-4$ distribution. 

Finally, note that when using LHV-Net to analyze different ternary coarse-grainings for the parties or binary coarse-grainings, it appears that all resulting distributions are triangle-local.

\begin{figure}[t]
\centering
\includegraphics[width=\columnwidth]{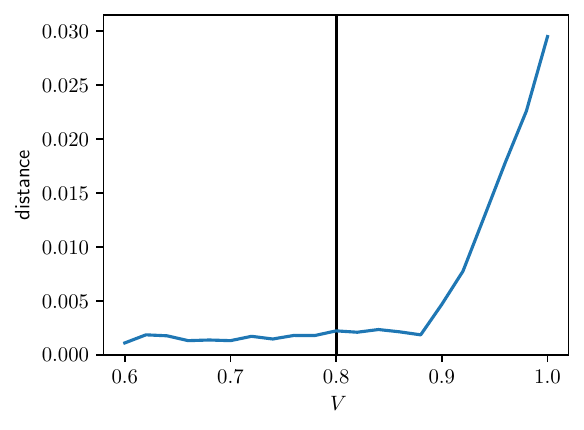}
\caption{Noise robustness for the coarse-grained Elegant distribution. The plot shows the distance $d(P_{target}, P_{NN})$ between the target distribution and the closest local distribution found by LHV-Net, for different values of the visibility $V$. The vertical line represents the critical visibility for the original four-output distributions as found in \cite{krivachy_neural_2020}. Again, it seems that coarse-graining reduces robustness to noise.  }
\label{fig:Elegant_plots}
\end{figure}

\section{Quantum nonlocality with output cardinality 3-3-2}

We now present an example of quantum nonlocality in the triangle that uses even smaller cardinalities for the outputs, namely $3-3-2$. This example is constructed from coarse-graining the Fritz distribution \cite{Fritz_2012}.

Let us first recall the model of Fritz. The idea is to have Alice and Bob perform a standard CHSH Bell test~\cite{CHSH}. Hence they share a singlet Bell state $\ket{\psi^-}$. Now, the binary measurement inputs that are required for both Alice and Bob for testing the CHSH inequality are provided by the two extra sources. Specifically, the $\beta$ source provides a uniformly random bit $x=0,1$ while the $\alpha$ source produces a uniformly random bit $y=0,1$. Upon receiving their effective inputs $x$ and $y$, Alice and Bob perform the corresponding local Pauli measurements ($\sigma_z$ or $\sigma_x$ for Alice and $(-\sigma_z-\sigma_x)/\sqrt{2}$ or $(-\sigma_z+\sigma_x)/\sqrt{2}$ for Bob) and obtain binary outputs $a'$ and $b'$. Finally, Alice outputs $a=(a',x)$, Bob $b=(b',y)$ and Charlie $c=(x,y)$. Note that it is crucial that Charlie also broadcasts both effective inputs $x$ and $y$ in order to ensure the condition of measurement independence. 

Fritz showed that since the values of the outputs $x(y)$ are perfectly correlated between Charlie and Alice (Charlie and Bob), these outputs have to be independent of the source connecting Alice and Bob~\cite{Fritz_2012}. Therefore, the resulting distribution $P_F(a,b,c)$ is nonlocal whenever the conditional distribution $P(a',b'|x,y)$ violates the CHSH Bell inequality. When adding noise to the singlet state shared by Alice and Bob (similarly to Eq. \eqref{werner}), one finds a critical visibility of $V^* = 1/\sqrt{2}$. 

We first consider the following coarse-graining. On Alice's side, the output is ternary and given by: $a=0$ if $a'=0$, $a=1$ if $x=0$ and $a'=1$, and $a=2$ if $x=1$ and $a'=1$. Similarly, on Bob's side, we define a ternary output according to: $b=0$ if $x=0$ and $b'=0$, $b=1$ if $y=1$ and $b'=0$, and $b=2$ if $b'=1$. Finally, on Charlie, we get a binary output: $c=xy$, i.e. $c=1$ only if $x=y=1$. Note that we consider here uniform distributions for both effective inputs to be biased, specifically $p(x=1) = p(y=1) = 1/2$, which renders the binary output $c$ biased. We have checked that changing these input distributions only marginally influence the result.

The resulting distribution, with cardinality $3-3-2$ is nonlocal, which we can prove using the inflation technique. Specifically, we consider the so-called web inflation, which allows us to construct a nonlinear inequality for detecting the above coarse-grained distribution for visibility greater than $ \approx 0.87$. This value is higher compared to the critical visibility of the original Fritz distribution. This appears to be a limitation of our inflation, since using LHV-Net, we find that the coarse-grained $3-3-2$ distribution still has a critical visibility close to the original value $1/\sqrt{2}\simeq 0.71$. These results are shown in Fig. \ref{fig:Fritz_plots}.

\begin{figure}[t]
\centering
\includegraphics[width=\columnwidth]{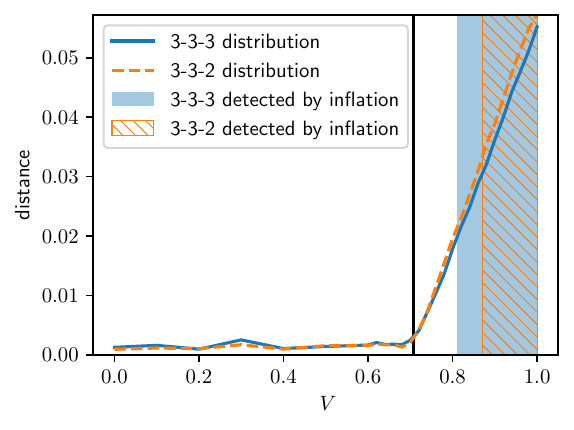}
\caption{Noise robustness of the coarse-grained Fritz distributions. The plot shows the distance $d(P_{target}, P_{NN})$ between the target distribution and the closest local distribution found by LHV-Net, for different values of the visibility $V$. Both $3-3-2$ and $3-3-3$ cases are shown. The vertical line represents the critical visibility ($V^*=1/\sqrt{2}$) for the original four-output distributions, see \cite{krivachy_neural_2020}.
The colored regions indicate the visibilities detected via the web inflation, see Appendix \ref{appendix_inflation} for details. }
\label{fig:Fritz_plots}
\end{figure}

More generally, there exist many different possible coarse-grainings of the Fritz distribution that lead to interesting $3-3-2$ or $3-3-3$ distributions. In Appendix \ref{appendix_inflation}, we provide a detailed analysis of all combinations resulting in triangle-nonlocal distributions and discuss their noise tolerance.

Finally, we also tried here to obtain quantum nonlocality from coarse-graining to a binary distribution, but in all cases, LHV-Net can reproduce the resulting distribution with excellent accuracy.

\section{Triangle with binary outputs}

Finding an example of quantum nonlocality in the triangle with binary outputs would of course provide a minimal example of network nonlocality. At this point, it is still an open question whether this is possible or not. We obtained relatively convincing numerical evidence that all examples discussed above become triangle-local when coarse-grained to binary outputs. It also seems that the rigidity-based proofs of nonlocality, along the lines of Theorem \ref{theorem2}, cannot work for binary quantum distributions, since they require to elaborate on a coarse-groaned version of the original distribution.

We also investigated another approach based on a connection to the Lovasz local lemma. The latter states that when considering events that are almost independent and individually not very likely, there is always a non-vanishing probability that none of them occurs. A refined version called variational Lovasz local lemma (VLLL) \cite{he2017variable} applies to events described by a fixed event-variable-graph, or to network-local models with binary outputs in the language of this paper. When applied to the case of the triangle network, Theorem 4 in~\cite{he2017variable} implies that for any triangle-local strategy
\begin{equation}
    P(j=0) < p^*  \quad \forall \, j=a,b,c  \,\, \implies P(a=b=c=1)>0 \label{LLL}
\end{equation}
where $p^* = (\sqrt{5}-1)/2 \approx 0.38$. Interestingly, the above relation is tight and can be saturated with a classical strategy using only two sources\footnote{Consider a source distributing the shared variable $\beta=0,1$ to A and C with probability $P(\beta=1)= p^*$. Similarly, B and C are connected via a source sending $\alpha=0,1$ with probability $P(\alpha=1)= p^*$. Then consider that A outputs $\beta$, B outputs $\alpha$, while C outputs $c=\alpha\beta \oplus 1$. It is straightforward to check that the resulting distribution saturates the condition given in Eq. \eqref{LLL}. That is, it leads to $P(a=0)=P(b=0)=P(c=0)=p^*$ and $P(a=b=c=1)=0$.} (i.e. corresponding to the bilocality network). The implication~\eqref{LLL} is not necessarily true for quantum strategies. Moreover, finding a quantum example $P_Q(a,b,c)$ that violates the VLLL (the condition~ \eqref{LLL}) would  reveal nonlocality in the $2-2-2$ triangle. We have investigated this question numerically, but could not find an instance of a quantum violation.

\section{Discussion}

We investigated the question of finding a minimal example of quantum nonlocality without inputs in a network. We focused on the triangle network, as this involves the minimal number of parties, i.e. three. We found two classes of examples involving respectively output cardinalities $3-3-3$ and $3-3-2$. The examples were constructed from coarse-graining of previously known examples. To prove some of our results, we considered the scenario of Parity Token Counting, for which we proved an almost-rigidity property, which can be of independent interest as it represents a minimal example of rigidity of a classical distribution. We also discussed the noise robustness of these examples.

The main question left open here is whether there exists an example of quantum nonlocality in the minimal triangle network with binary outputs. For all examples we found, it appears that further coarse-graining to obtain binary outputs leads to triangle-local distributions. We also established a connection to the Lovasz local lemma, but could  not find a quantum nonlocal distribution with this approach. We note, however, that a slightly more complicated network, namely a ring network with four parties (i.e. a square) allows for quantum nonlocality without inputs and binary outputs \cite{Restivo}.

In addition to the binary outputs question, it would be interesting to find an  example of nonlocality without inputs in the $3-2-2$ triangle network. Here the set of correlation $P(a,b,c)$ is described by $11=3\times2\times 2-1$ real parameters. This is lower than the $12= 2\times2\times2\times 2-4$ parameters needed to describe the set of correlations $P(a,b|x,y)$ in the CHSH scenario. Hence, from the perspective of the dimension of the correlation set, an example of quantum nonlocality in the triangle with output cardinality $3-2-2$ would be ''more minimal'' than the minimal example of standard Bell nonlocality.

Finally, another interesting question is whether the $3-3-3$ examples we constructed here feature some stronger form of network correlations, for example genuine network nonlocality \cite{Supic2022} or full network nonlocality \cite{Pozas_full}. Since our $3-3-2$ example is constructed from the Fritz distribution, these can be neither genuine network nonlocal nor full network nonlocal.

\medskip

\emph{Acknowledgements.---} We thank Marc-Olivier Renou for the discussions. We acknowledge financial support from the Swiss National Science Foundation (project 2000021\_192244/1 and NCCR SwissMAP).

\onecolumngrid

\appendix

\setcounter{theorem}{0}

\section{The set $\mathcal{L}_{PTC}$  of Parity Token Counting distributions in the $2-2-2$ triangle} \label{Appendix_PRCdistributions}

PTC local models defined in the main text are constructed with bipartite sources sampling
\begin{equation}\begin{split}
(\alpha_B, \alpha_C) &= \begin{cases} (1,0) & \text{with probability } \quad p_\alpha \\
(0,1) & \text{with probability} \quad 1- p_\alpha
\end{cases}\\
(\beta_C, \beta_A) &= \begin{cases} (1,0) & \text{with probability } \quad p_\beta \\
(0,1) & \text{with probability} \quad 1- p_\beta
\end{cases}\\
(\gamma_A, \gamma_B) &= \begin{cases} (1,0) & \text{with probability } \quad p_\gamma \\
(0,1) & \text{with probability} \quad 1- p_\gamma
\end{cases}.
\end{split}
\end{equation}
and fixed binary response functions
\begin{equation}
    a(\beta_A,\gamma_A) = \beta_A\oplus\gamma_A \qquad   b(\gamma_B,\alpha_B) = \gamma_B \oplus \alpha_B \qquad c(\alpha_C,\beta_C) = \alpha_C \oplus \beta_C.
\end{equation}

Obviously, the set $\mathcal{L}_{PTC}$ of PTC distributions is parameterized by the three values $(p_\alpha, p_\beta, p_\gamma)$ and is at most of dimension 3. $\mathcal{L}_{PTC}$ is a subset of the 7-dimensional local set $\mathcal{L}_{2-2-2}$ for the $2-2-2$ triangle. The exact dependence of the probability distribution $P_{PTC}(a,b,c)$ on the parameters of the sources are given in Table.~\ref{tab:PTC}. From there one can easily verify that $\mathcal{L}_{PTC}$ is indeed 3-dimensional, e.g. by noting that $P_{PTC}(1,0,0),P_{PTC}(0,1,0)$ and $P_{PTC}(0,0,1)$ are linearly independent polynomials of $(p_\alpha, p_\beta, p_\gamma)$. One can also check that the transformation $(p_\alpha, p_\beta, p_\gamma) \to (1-p_\alpha, 1-p_\beta, 1-p_\gamma)$ does not affect the distribution. 

\begin{table}[h]
    \centering
    \begin{tabular}{|c|c c|}
    \hline
         $(b,c)$\textbackslash $a$&  0 & 1   \\
         \hline
         (0,0)& 0 & $p_\alpha (1-p_\beta)(1-p_\gamma)+ (1-p_\alpha) p_\beta p_\gamma$\\
         (0,1)& $p_\alpha p_\beta (1-p_\gamma)+ (1-p_\alpha)(1-p_\beta)p_\gamma$ & 0\\
         (1,0)& $p_\alpha (1-p_\beta) p_\gamma + (1-p_\alpha) p_\beta (1- p_\gamma)$& 0\\
         (1,1)& 0 & $p_\alpha p_\beta p_\gamma+ (1-p_\alpha)(1-p_\beta)(1-p_\gamma)$\\
         \hline
    \end{tabular}
    \caption{The table of probabilities $P_{PTC}(a,b,c)$ as a function of $(p_\alpha, p_{\beta},p_\gamma)$.}
    \label{tab:PTC}
\end{table}

To describe the set $\mathcal{L}_{PTC}$ it is convenient to introduce the one party correlators
\begin{equation}
E_A =2 P(a=1) - 1 \qquad E_B =2 P(b=1) - 1 \qquad E_C =2 P(c=1) - 1,
\end{equation}
which satisfy the equations 
\begin{equation}\label{eq: corr params}
P(a=1) = \frac{1+E_A}{2} = p_\beta p_\gamma +(1-p_\beta)(1-p_\gamma)
\end{equation}
upon party permutations. Given these equations it is straightforward to rewrite the full distribution $P_{PTC}(a,b,c)$ in terms of the correlators $(E_A,E_B,E_C)$.  It is as given in Table.~\ref{tab:PTC corr}. Since any PTC distribution can be reconstructed completely given the values of the correlators, it is natural to ask which values $(E_A,E_B,E_C)$ are actually achievable by some PTC strategy. To answer this question  we will now distinguish three disjoint subsets of $\mathcal{L}_{PTC}$ depending on the values of correlators.  \\

\begin{table}[h]
    \centering
    \begin{tabular}{|c|c c|}
    \hline
         $(b,c)$\textbackslash $a$&  0 & 1   \\
         \hline
         (0,0)& 0 & $\frac{1}{4}(1+E_A-E_B-E_C)$\\
         (0,1)& $\frac{1}{4}(1-E_A-E_B+E_C)$ & 0\\
         (1,0)& $\frac{1}{4}(1-E_A+E_B-E_C)$& 0\\
         (1,1)& 0 & $\frac{1}{4}(1+E_A+E_B+E_C)$\\
         \hline
    \end{tabular}
    \caption{The table of probabilities $P_{PTC}(a,b,c)$ as a function of $(E_A,E_B,E_C)$.}
    \label{tab:PTC corr}
\end{table}

\textbf{1. The generic case $E_A,E_B,E_C \neq 0$.}   Without loss of generality we permute the parties such that $E_A\geq E_B \geq E_C$. Formally solving the equations \eqref{eq: corr params} for all parties gives two solutions
\begin{equation}\label{eq: corr sol}
\left(\begin{array}{c}
p_\alpha \\
p_\beta \\
p_\gamma
\end{array} 
\right)=
\frac{1}{2}\left(\begin{array}{c}
 1 \pm \sqrt{\frac{E_B E_C}{E_A}} \\
 1 \pm \sqrt{\frac{E_A E_C}{E_B}}\\
1 \pm \sqrt{\frac{E_A E_B}{E_C}}
\end{array} 
\right).
\end{equation}
Note that the two solutions are related by the transformation $(p_\alpha, p_\beta, p_\gamma) \to (1-p_\alpha, 1-p_\beta, 1-p_\gamma)$ mentioned above. This degeneracy is thus easy to understand, and we can focus on the solution with the ''+'' sign. 
The solution in Eq.~\eqref{eq: corr sol} makes physical sense if and only if the numbers on the rhs are probabilities, that is $(\ast) \, \sqrt{\frac{E_B E_C}{E_A}} \in \mathds{R}$ and $ (\ast\ast)\, \sqrt{\frac{E_B E_C}{E_A}} \leq 1$ for all party permutations. The reality conditions imply that only an even number of correlators $E_A, E_B, E_C$ can be negative, leading to two different cases $(E_A\geq E_B\geq E_C \geq 0)$ or $(E_A\geq 0 > E_B \geq E_C)$ that fulfill $(\ast)$.
\begin{itemize}
    \item [\textbf{1a.}] For $(E_A\geq E_B\geq E_C > 0)$, the fractions are ordered as $\sqrt{\frac{E_A E_B}{E_C}}\geq \sqrt{\frac{E_A E_C}{E_B}} \geq \sqrt{\frac{E_B E_C}{E_A}}$, hence the condition $(\ast \ast)$ is verified iff $\sqrt{\frac{E_A E_B}{E_C}}\leq 1$ or simply $E_A E_B \leq E_C$.

\item[\textbf{1b.}] For $(E_A> 0 > E_B \geq E_C)$, we $E_C \leq E_B$ and thus $|E_C|\geq |E_B|$. The fractions are partially ordered $\sqrt{\frac{E_A E_B}{E_C}}\leq \sqrt{\frac{E_A E_C}{E_B}} $ and the condition $(\ast \ast)$ is fullfilled iff both $\sqrt{\frac{E_A E_C}{E_B}}, \sqrt{\frac{E_B E_C}{E_A}} \leq 1$. The two inequalities can be rewrittend as $E_A |E_C| \leq |E_B |\leq \frac{E_A}{|E_C|}$.
\end{itemize}
 
 \textbf{ 2. The measure zero case $E_A E_B E_C = 0$.} Without loss of generality let us assume that $E_A=0$. By Eq.~\eqref{eq: corr params} we have
 \begin{equation}
     \frac{1}{2} = p_\beta p_\gamma +(1-p_\beta)(1-p_\gamma) \implies p_\beta = \frac{1}{2} \quad \text{or} \quad p_\gamma =\frac{1}{2}.
 \end{equation}
 Let us consider the case where $p_\beta=\frac{1}{2}$. This automatically implies that $E_C=0$ as well. Furthermore, with the help of the Table.~\ref{tab:PTC} one computes the full probability distribution  
 \begin{equation}\begin{split}
     P_{PTC}(1,0,0)&=P_{PTC}(0,0,1) =\frac{1}{2}(p_\alpha (1-p_\gamma)+ (1-p_\alpha) p_\gamma )=\frac{1-E_B}{4} \\
     P_{PTC}(0,1,0)&=P_{PTC}(1,1,1) = \frac{1}{2}(p_\alpha p_\gamma + (1-p_\alpha)  (1- p_\gamma)) = \frac{1+E_B}{4}.
 \end{split}
 \end{equation}
There is thus a single degree of freedom left, and it can be parametrized by the value of the remaining correlator
$E_B$. Its value 
\begin{equation}\label{eq: EB}
     E_B = 2(p_\alpha p_\gamma +(1-p_\alpha)(1-p_\gamma)) -1,
 \end{equation}
can be chosen freely in the interval $[-1,1]$, and any choice of $p_\alpha$  and $p_\gamma$ fulfilling Eq.~\eqref{eq: EB} realises it.
Hence, we conclude that there exists a PTC distribution with  
$(E_A,E_B,E_C) = (0, E_B, 0)$ (and any permutation of parties), and there are infinitely many PTC strategies that achieve it. Again we can distinguish two cases. 

\begin{itemize}
\item[\textbf{2a.}] For a nonnegative  value $E_B\geq 0$ one can exchange the parties $A$ and $B$ and realize that $(E_A\geq 0,0)$ is a continuation of the case (1.a) where  $(E_A\geq E_B\geq E_C > 0)$ \\
\item[\textbf{2b.}] Similarly, for a negative value $E_B<0$ one exchanges $B$ with $C$ and notice that $(0,0, E_C<0)$ is the continuation of the case (1.b) where ($E_A> 0 > E_B \geq E_C)$.
\end{itemize}
With now combine all the cases the cases $1$ and $2$ in the following observation.\\

\noindent \textbf{Theorem 3.}\textit{
There exists a PTC distribution $P_{PTC}(a,b,c)$ with the correlator values $E_A\geq E_B \geq E_C$  (where $E_{A}= 2 P(a=1)-1$) if and only if one of the following is true
 \begin{itemize}
 \item[\textbf{1a.}]  $E_A,E_B,E_C > 0$ and  $E_C \geq E_A E_B$,
 \item[\textbf{1b.}]
      $E_A >0$; $E_B , E_C < 0$ and   $E_A |E_C| \leq |E_B |\leq \frac{E_A}{|E_C|}$,
\item[\textbf{2a.}] $E_A\geq 0$ and $E_B = E_C =0$,
\item[\textbf{2b.}] $E_A= E_B = 0$ and  $ E_C< 0$.
\end{itemize}
The PTC distribution is unique and given in Table.~\ref{tab:PTC corr}.
 In addition, in the cases \textbf{1a} and \textbf{1b} the distribution can be simulated with exactly two strategies  $(p_\alpha ,p_\beta ,p_\gamma) =\left(\frac{1}{2}(1+\sqrt{\frac{E_B E_C}{E_A}}), \frac{1}{2}(1+\sqrt{\frac{E_A E_C}{E_B}}),\frac{1}{2}(1+\sqrt{\frac{E_A E_B}{E_C}})\right)$ and $(1-p_\alpha,1-p_\beta,1-p_\gamma)$. In the cases \textbf{2a} and \textbf{2b} there are infinitely many PTC strategies (1-parameter family) that simulate the distribution.}\\

Obviously, the observation remains true for any permutation of the parties. This gives a full characterization of the set $\mathcal{L}_{PTC}$ and the multiplicity of the underlying PTC strategies.

\section{Proof of Almost-Rigidity of Parity Token Counting distributions} \label{Appendix_SemiRig}
\begin{proof}
Take any value of the outputs $a_0 b_0 c_0$ that has non-zero probability and  consider the values $\alpha_0,\beta_0, \gamma_0$ of the local variables that lead to $a_0 b_0 c_0$. Now, we define the derivative of an output with respect to a local variable as follows
\begin{equation}\label{eq:derivative}
    \nabla_{\alpha_0} b(\gamma,\alpha) = b(\gamma,\alpha) \ominus b(\gamma,\alpha_0).
\end{equation}
Here by $\ominus$ we denote the difference modulo 2, $x \ominus y = (x -y) \, \text{mod} \, 2$.

Since $a\oplus b\oplus c =1$ for any PTC distribution, we have: 
\begin{equation*}
    \nabla_{\alpha_0} a(\beta, \gamma) \oplus \nabla_{\alpha_0} b(\gamma,\alpha) \oplus \nabla_{\alpha_0} c(\alpha, \beta) = 0
\end{equation*}
The output $a(\beta, \gamma)$ does not depend on $\alpha$, therefore we have $\nabla_{\alpha_0} a(\beta, \gamma) = a(\beta, \gamma)- a(\beta, \gamma) = 0$ and 
$\nabla_{\alpha_0} b(\gamma,\alpha) = \nabla_{\alpha_0} c(\alpha, \beta)$. Here $\nabla_{\alpha_0} b(\gamma,\alpha)$ is independent of $\beta$ and $\nabla_{\alpha_0} c(\alpha, \beta)$ is independent of $\gamma$, being always equal they can only depend on $\alpha$. We can thus define 
\begin{equation}\label{eq:f_alpha_b}
     f_{\alpha_0} (\alpha) = \nabla_{\alpha_0} b(\gamma,\alpha) = \nabla_{\alpha_0} c(\alpha, \beta). 
\end{equation}
In a similar manner, one defines $f_{\beta_0}(\beta)=\nabla_{\beta_0} a(\beta,\gamma) = \nabla_{\beta_0} c(\alpha, \beta)$ and $f_{\gamma_0}(\gamma)=\nabla_{\gamma_0} a(\beta,\gamma) = \nabla_{\gamma_0} b(\alpha, \gamma)$. 

Now, following the equation \ref{eq:f_alpha_b} and the definition of the derivative, equation \ref{eq:derivative}, we have 
\begin{equation*}\begin{split}
     b(\gamma,\alpha) &=  b(\gamma,\alpha_0) \oplus f_{\alpha_0} (\alpha) \\
     b(\gamma,\alpha) &=  b(\gamma_0,\alpha) \oplus f_{\gamma_0} (\gamma)
\end{split}
\end{equation*}
Combining the two gives
\begin{equation*}
     b(\gamma,\alpha) = b(\gamma_0,\alpha_0) \oplus f_{\alpha_0} (\alpha) \oplus f_{\gamma_0} (\gamma) = \underbrace{b_0 \oplus f_{\alpha_0} (\alpha)}_{\equiv T_{\alpha}^b(\alpha)} \oplus \underbrace{f_{\gamma_0} (\gamma)}_{\equiv T_{\gamma}^b(\gamma)}
\end{equation*} 
where we defined the functions  $T_{\alpha}^b(\alpha)$ and $T_{\gamma}^b(\gamma)$. Similarly, we can define 
\begin{align*}
    & a(\beta, \gamma) =  a_0 \oplus f_{\gamma_0} (\gamma)\oplus f_{\beta_0} (\beta) = \underbrace{1 \oplus f_{\gamma_0} (\gamma)}_{\equiv T_{\gamma}^a (\gamma)} \oplus \underbrace{1\oplus a_0\oplus f_{\beta_0} (\beta)}_{\equiv T_{\beta}^a (\beta)}\\
    & c(\alpha, \beta) = c_0 \oplus f_{\beta_0} (\beta) \oplus f_{\alpha_0} (\alpha) =  \underbrace{1\oplus b_0 \oplus c_0 \oplus f_{\beta_0} (\beta)}_{\equiv T_{\beta}^c (\beta)} \oplus \underbrace{1\oplus b_0\oplus f_{\alpha_0} (\alpha)}_{\equiv T_{\alpha}^c (\alpha)} 
\end{align*}

Manifestly, defined in the above way the function ${T_{j}^i}$ fulfill \rm (ii). To prove \rm (i) simply note that 
\begin{equation*}\begin{split}
T_{\gamma}^b(\gamma) \oplus T_{\gamma}^a(\gamma) &= f_{\gamma_0}(\gamma) \oplus 1\oplus   f_{\gamma_0}(\gamma) =1
\\
T_{\alpha}^b(\alpha) \oplus T_{\alpha}^c(\alpha) &=
b_0\oplus f_{\alpha_0}(\alpha) \oplus 1 \oplus b_0 \oplus f_{\alpha_0}(\alpha) =1 
\\
T_{\beta}^a(\beta) \oplus T_{\beta}^c(\beta) &= 
1 \oplus a_0 \oplus f_{\beta_0}(\beta)\oplus 1 \oplus b_0 \oplus c_0 \oplus f_{\beta_0}(\beta) =1,
\end{split}
\end{equation*}
where we used $a_0\oplus b_0 \oplus c_0 =1$ for the last equality.

Finally, let us prove (iii) for the distributions with $P(a=1),P(b=1), P(c=1)\neq \frac{1}{2}$. In the appendix~\ref{Appendix_PRCdistributions} (Theorem 3) we proved that in this case $(E_A,E_B,E_C \neq 0)$ there are only two PTC strategies that achieve $P_{PTC}(a,b,c)$, namely $(p_\alpha, p_\beta, p_\gamma)$ and $(1-p_\alpha, 1-p_\beta, 1-p_\gamma)$. We thus know that there are function $T_{i}^j$ that satisfy (i) and (ii) and can only lead to two possibilities 
\begin{equation}
    (
\mathds{E}[T_{\alpha}^b],
\mathds{E}[T_{\beta}^c],
\mathds{E}[T_{\gamma}^a]
)=(p_\alpha, p_\beta, p_\gamma) \qquad \text{or}\qquad (
\mathds{E}[T_{\alpha}^b],
\mathds{E}[T_{\beta}^c],
\mathds{E}[T_{\gamma}^a]
)=(1-p_\alpha, 1-p_\beta, 1-p_\gamma)
\end{equation}
In the first case (iii) holds automaticaly. In the second one can define new compression functions $\bar T_{i}^j = T_{i}^j \oplus 1$ with "flipped" token directions. These function also satisfy (i,ii) and give  $\mathds{E}[\bar T_{i}^j]= 1-\mathds{E}[T_{i}^j]$. Hence, using them as compression functions gives back the original strategy $(p_\alpha, p_\beta, p_\gamma)$, and proves (iii).
\end{proof}

\section{Proof of main result: Theorem \ref{theorem_parity_qubit}}

\label{Appendix_RGB4_proof} 

Consider each source distributing an entangled two-qubit state 
\begin{equation*}
\ket{\psi} = \lambda_0\ket{01}+\lambda_1\ket{10}.
\end{equation*}
Where $\lambda_0^2 + \lambda_1^2 = 1$, and $\lambda_0 \not= \sqrt{1/2}$. Each party performs the following two-qubit joint measurement with ternary outputs $a,b,c\,{\in}\,\{\bar{0}, \bar{1}_0, \bar{1}_1 \}$.
\begin{equation*} 
  \begin{split}
       \bar{0} : \Pi_{\bar 0} &=\ketbra{00}{00} + \ketbra{11}{11} \\
       \bar{1}_0 : \Pi_{\bar {1}_{0}} &= \ketbra{\bar{1}_0}{\bar{1}_0} \\
       \bar{1}_1 : \Pi_{\bar{1}_1} &= \ketbra{\bar{1}_1}{\bar{1}_1} \\
  \end{split}
\end{equation*}
where we define, $\ket{\bar{1}_0} =  u\ket{01}+v\ket{10},~ \ket{\bar{1}_1} = v\ket{01}-u\ket{10}$, with $0<v<u<1$ real parameters that satisfy $u^2\,{+}\,v^2\,{=}\,1$. The resulting probability distribution, $P_{Q}$, is given by 
\begin{align*}
P_Q(a,b,c) = \Tr[(\Pi_a\otimes \Pi_b\otimes \Pi_c) \ketbra{\Psi}{\Psi}] ~ ~ ~ ~ ~ ~ ~ ~ a,b,c\,{\in}\,\{\bar{0}, \bar{1}_0, \bar{1}_1 \}
\end{align*}
Where in $\ket{\Psi}$, the respective Hilbert spaces are suitably ordered according to the triangle configuration.
\begin{align*}
\ket{\Psi}_{A_1 A_2 B_1 B_2 C_1 C_2} \equiv \ket{\psi}_{A_2 B_1} \otimes \ket{\psi}_{B_2 C_1} \otimes \ket{\psi}_{C_2 A_1}
\end{align*}
Introducing the notation $u_0 = -v_1 =  u$ and $v_0 =  u_1 = v$ (such that $\ket{\bar{1}_t}=u_t \ket{01} + v_t \ket{10}$) we have that:
\begin{align}
P_Q(\bar{1}_i, \bar{0},\bar{0}) & = \lambda_1^4 \lambda_0^2 u_i^2 + \lambda_0^4 \lambda_1^2 v_i^2\label{app:constraint2}\\
P_Q( \bar{1}_i, \bar{1}_j , \bar{1}_k) &=  (\lambda_1^3 u_i u_j u_k + \lambda_0^3 v_i v_j v_k)^2 \label{constraint3_appendix}\\
P_Q( \bar{1}_i, \bar{1}_j , \bar 0) &= P_Q(\bar 0,\bar 0,\bar 0)=0 \label{app:constraint3}
\end{align}
and similar relations for permutions  of the parties.  We will now prove the following theorem.\\

\textbf{Theorem 2 (generalized)} 
\textit{
For $\lambda_0^2\neq \frac{1}{2}$ the quantum distribution $P_Q(a,b,c)$ in Eqs.~(\ref{app:constraint2}-\ref{app:constraint3}) is not triangle-local (incompatible with any model of the form ~\eqref{trilocal}) if} 
\begin{equation}\label{eq: the statement appendix}
3(\lambda_1^3 u^2 v - \lambda_0^3 u v^2)^2- 3u^2 (\lambda_1^6+\lambda_0^6) + 2(\lambda_0^6 + (\lambda_1^3 u^3+\lambda_0^3 v^3)^2) + \lambda_1^6 +(\lambda_1^3 v^3-\lambda_0^3 u^3)^2 < 0.
\end{equation}\\

Notably the condition of nonlocality in Eq.~\eqref{app:constraint2} is exacly similar to the one obtain in Ref. \cite{Renou_2019} for the so-called family of  RGB4 distributions (excepts for the case with $\lambda_0^2=1/2$). \\

Note also that the formulation of the Theorem given in the main text is just a restriction of the above statement to the particular case $\lambda_0^2 = 1/3$. Indeed, for this value the bound in Eq.~\eqref{eq: the statement appendix} becomes particularly simple $u^2 >  \frac{2}{3}$. \\

 We will now prove the theorem by contradiction assuming the existence of a trilocal model reproducing $P_Q(a,b,c)$,  we identify conditions that this model should satisfy, leading to  a contradiction for certain choices of the measurement parameter $u$.\\

Notice that the measurement can be performed in two steps. In the first step, the parties measure the operators in the basis $\bar{0}\,{ : }\,\ketbra{00}{00}+\ketbra{11}{11}$, $\bar{1}\,{ : }\,\ketbra{01}{01}+\ketbra{10}{10}$, where $\bar{n}$ is the parity of received tokens, obtaining the distribution $P_{parity}$. Second, whenever $\bar{1}$ is obtained, the party performs an additional measurement given by $\bar{1}_0$, $\bar{1}_1$, which are the only entangled measurement operators in this setting. Since the measurement operators corresponding to the binned outcomes $\{\bar{0}, \bar{1}\}$, are diagonal in the basis of token numbers, substituting the quantum sources with the sources preparing separable states $\lambda_0^2 \ketbra{01}{01}+\lambda_1^2\ketbra{10}{10}$ has no effect on $P_{parity}$. This is equivalent to the classical parity token-counting strategy with the sources delivering their token to the left with probability $\lambda_0^2$ and to the right with probability $\lambda_1^2$. As a result, any potential classical strategy that simulates $P_\text{Q}$ imposes a classical strategy for $P_{parity}$, which must be the classical parity token-counting strategy, due to the almost-rigidity property.

\begin{figure}[H]
    \centering
    \includegraphics[scale=0.4]{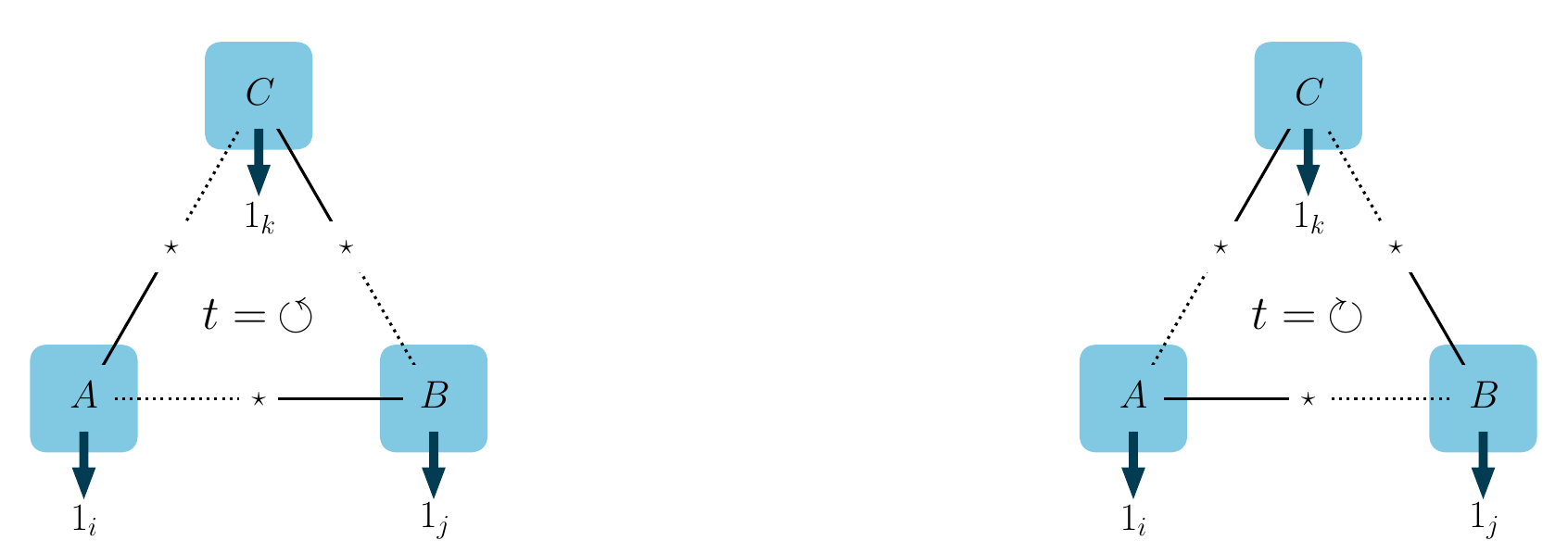}
    \caption{In uniform token-counting strategy over the triangle, when all parties obtain $\bar{1}$, the token paths are undetermined between $t\,{=}\circlearrowleft$ and $t\,{=}\circlearrowright$. In a quantum scenario the system could be in a superposition state of $t\,{=}\circlearrowleft$ and $t\,{=}\circlearrowright$. However in a classical scenario the underlying trajectory of tokens is either $t\,{=}\circlearrowleft$ or $t\,{=}\circlearrowright$.}
    \label{fig:triangle_clk_unticlk}
\end{figure}

Therefore, any potential classical simulating strategy for $P_\text{Q}$, involves a new classical hidden variable which shows the movement of each individual tokens. In particular, considering the case when all parties output $\bar{1}$, the source's tokens must be transmitted in either the clockwise ($t\,{=}\circlearrowright$) or in the anti-clockwise ($t\,{=}\circlearrowleft$) direction, as illustrated in Figure~\ref{fig:triangle_clk_unticlk}. Therefore, if a trilocal model existed, one should be able to define the following joint probability distribution for $ t \in \{ \circlearrowright , \circlearrowleft\}$
\begin{equation}
   q(i,j,k,t) = \text{Pr}(a=\bar{1}_i,\ b=\bar{1}_j,\ c=\bar{1}_k,\ t \ | \ \mathrm{all~1}) =
   \frac{\text{Pr}(a=\bar{1}_i,\ b=\bar{1}_j,\ c=\bar{1}_k,\ t)}{\lambda_1^6 + \lambda_0^6}
\end{equation}
where "all~1" indicates the event of all parties outputting one token, and the last equality is due to the fact that $\text{Pr}(\mathrm{all~1}) = {\lambda_1^6 + \lambda_0^6}$. We know that this distribution must satisfy
\begin{equation}
    q(i,j,k)= \sum_{t =  \circlearrowright , \circlearrowleft} q(i,j,k,t) = \frac{P_Q( \bar{1}_i, \bar{1}_j , \bar{1}_k)}{\lambda_1^6 + \lambda_0^6} = \frac{( \lambda_1^3 u_i u_j u_k +  \lambda_0^3 v_i v_j v_k )^2}{\lambda_1^6 + \lambda_0^6} . \label{marg_q_ijk_appendix}
\end{equation}

Next, we explicitly compute some of the marginals of $q(i,j,k,t)$ using the fact that the underlying model abides to the triangle network. To do so we use the notation $\alpha \mapsto B$ to mean that the source $\alpha$ sent its token to the party $B$. The network stature implies that Alice's output must be independent of the $\alpha$ source, in particular
\begin{align*}
    \text{Pr}(a=\bar{1}_i,\gamma \mapsto A, \beta \mapsto C| \alpha \mapsto B) &= \text{Pr}(a=\bar{1}_i,\gamma \mapsto A, \beta \mapsto C|\alpha \mapsto C) \qquad \Leftrightarrow \\
   \frac{\text{Pr}(\alpha \mapsto C) }{\text{Pr}(\alpha \mapsto B)}  \text{Pr}(a=\bar{1}_i, \alpha \mapsto B,\gamma \mapsto A, \beta \mapsto C) &= \text{Pr}(a=\bar{1}_i, \alpha \mapsto C,\gamma \mapsto A, \beta \mapsto C).
\end{align*}
Furthermore, note that here 
\begin{align*} 
\text{Pr}(a=\bar{1}_i, \alpha \mapsto B,\gamma \mapsto A, \beta \mapsto C)
&= \text{Pr}(a=\bar{1}_i,t=\circlearrowright) 
= \sum_{j,k}  \text{Pr}(a=\bar{1}_i,b=\bar{1}_j,c=\bar{1}_k,t=\circlearrowright) \\
&=\text{Pr}(\mathrm{all~1})\sum_{jk} q(i,j,k,t=\circlearrowright),
\end{align*}
which allows us to rewrite the last equation as
\begin{equation}
    \frac{\text{Pr}(\alpha \mapsto C) }{\text{Pr}(\alpha \mapsto B)}  \,\text{Pr}(\mathrm{all~1})\sum_{jk} q(i,j,k,t=\circlearrowright) = \text{Pr}(a=\bar{1}_i, \alpha \mapsto C,\gamma \mapsto A, \beta \mapsto C).
\end{equation}
In the same way for $t=\circlearrowleft$ one obtains the relation
\begin{equation}\label{eq_app:const_qi_a}
\frac{\text{Pr}(\alpha \mapsto B)} {\text{Pr}(\alpha \mapsto C)} P(\mathrm{all~1})\sum_{jk} q(i,j,k,t=\circlearrowleft) = \text{Pr}(a=\bar{1}_i, \alpha \mapsto B,\gamma \mapsto B, \beta \mapsto A).
\end{equation}
Finally, using $\text{Pr}(\alpha \mapsto B) = \lambda_1^2 , \text{Pr}(\alpha \mapsto C) = \lambda_0^2, \text{Pr}(\mathrm{all~1}) = \lambda_1^6 + \lambda_0^6$, and 
$\text{Pr}(a=\bar{1}_i, \alpha \mapsto C,\gamma \mapsto A, \beta \mapsto C) + \text{Pr}(a=\bar{1}_i, \alpha \mapsto B,\gamma \mapsto B, \beta \mapsto A) = \text{Pr}(1_i,0,0)$, we obtain the desired constraint on the marginals of $q(i,j,k,t)$
\begin{equation}
\frac{\lambda_0^2}{\lambda_1^2}\underbrace{\sum_{j,k} q(i,j,k,t=\circlearrowright) }_{q(i,t=\circlearrowright)}+ \frac{\lambda_1^2}{\lambda_0^2}\underbrace{\sum_{j,k} q(i,j,k,t=\circlearrowleft)}_{q(i,t=\circlearrowleft)}  = \frac{\text{Pr}(1_i,0,0)}{\text{Pr}(\mathrm{all~1}) } =\frac{\lambda_1^4 \lambda_0^2 u_i^2 + \lambda_0^4 \lambda_1^2 v_i^2}{\lambda_1^6+\lambda_0^6} ,\label{marg_q_it_appendix}.
\end{equation}
By symmetry the same constraints hold for the marginal distributions of the other parties $q(j,t)$ and $q(k,t)$.
\\

Assuming the existence of a trilocal model, one should be able to define a distribution $q(i,j,k,t)$ that is consistent with the equality constraints of Eqs.
(\ref{marg_q_ijk_appendix}, \ref{marg_q_it_appendix}). This satisfying marginal constraints problem is a simple linear program (LP), which can be solved efficiently. In the next section \ref{appendix_lemma_noQ} we demonstrate analytically that this LP has no solution for some specific choices of the operators $\{\bar{1}_0,\bar{1}_1\}$, precisely when the parameters $\lambda_1$ and $u$ fulfill the condition given in Eq.~\eqref{eq: the statement appendix}, thereby prooving the theorem.\\

A key distinction between classical and quantum token-counting strategies is that in the quantum approach, when all parties have received one token, the global state is in a coherent superposition of tokens cycling clockwise and anti-clockwise. In contrast, in the classical case the whole system is either in state of tokens cycling clockwise or anti-clockwise with some probability
\begin{align*}
\mathrm{Classical~:~}&\rho = \lambda_1^2 \ketbra{\circlearrowright}{\circlearrowright}+\lambda_0^2\ketbra{\circlearrowleft}{\circlearrowleft}\\
\mathrm{Quantum~:~}& \rho = \Big(\lambda_1 \ket{\circlearrowright}+\lambda_0\ket{\circlearrowleft}\Big)\Big(\lambda_1 \bra{\circlearrowright}+\lambda_0\bra{\circlearrowleft}\Big)
\end{align*}
It is the global coherence between the superposed branches of all tokens going left and all tokens goings right which is ultimately responsible for the nonlocality of the quantum distribution.

\subsection{There is no $q(i,j,k,t)$ compatible with the marginals}
\label{appendix_lemma_noQ}

Let us present the probability distribution $\tilde{q}$ resulting from symmetrizing $q$ across the variables $i,j,k$:
\begin{equation*}
\tilde{q}(i,j,k,t)=\frac{1}{6} \left(q(i,j,k,t)+q(j,k,i,t)+q(k,i,j,t)+q(i,k,j,t)+q(k,j,i,t)+q(j,i,k,t)\right).
\end{equation*}

Evidently, $\tilde{q}$ satisfies the same constraints given in Eqs.~(\ref{marg_q_ijk_appendix}, \ref{marg_q_it_appendix}) as $q$, since they are invariant under the permutation of indices. Let us define $\xi_{ijk}:=\tilde{q}(i,j,k,t=\circlearrowright)-\tilde{q}(i,j,k,t=\circlearrowleft)$. Therefore we have
\begin{align*}
\tilde{q}(i,j,k,t=\circlearrowright)=\frac{1}{2} \left( \tilde{q}(i,j,k) + \xi_{ijk} \right), ~~~~~~~~
\tilde{q}(i,j,k,t=\circlearrowleft)=\frac{1}{2} \left( \tilde{q}(i,j,k) - \xi_{ijk} \right).
\end{align*}
For simplicity of notation, we introduce
$\xi_0:=\xi_{000}$, $\xi_{1}:=\xi_{100}=\xi_{010}=\xi_{001}$, $\xi_2:=\xi_{110}=\xi_{101}=\xi_{011}$ and $\xi_{3}:=\xi_{111}$. 

With the help of Eq.~\eqref{marg_q_ijk_appendix}  we can now write 
\begin{equation}
    \tilde q(k) = \sum_{i,j} \tilde q(i,j,k) = \sum_{i,j}  \frac{(\lambda_1^3 u_iu_ju_k + \lambda_0^3 v_iv_jv_k)^2 }{\lambda_1^6 + \lambda_0^6} = \frac{\lambda_1^6u_k^2+\lambda_0^6v_k^2}{\lambda_1^6+\lambda_0^6},
\end{equation}
and
\begin{align}
\label{eq: tilde q whatever 1}
\tilde{q}(k,t=\circlearrowright)&=\frac{1}{2} \sum_{i,j} \left( \tilde q(i,j,k)  + \xi_{ijk} \right)=\frac{1}{2} \left( \frac{\lambda_1^6u_k^2+\lambda_0^6v_k^2}{\lambda_1^6+\lambda_0^6} + \sum_{ij} \xi_{ijk} \right)\\
\label{eq: tilde q whatever 2}
\tilde{q}(k,t=\circlearrowleft)&=\frac{1}{2} \sum_{i,j} \left( \tilde q(i,j,k)  - \xi_{ijk} \right) = \frac{1}{2} \left( \frac{\lambda_1^6u_k^2+\lambda_0^6v_k^2}{\lambda_1^6+\lambda_0^6} - \sum_{ij} \xi_{ijk} \right).
\end{align}

Now let use these expression in Eq.~\eqref{marg_q_it_appendix}, we have 
\begin{align}
 \frac{\lambda_1^4 \lambda_0^2 u_i^2 + \lambda_0^4 \lambda_1^2 v_i^2}{\lambda_1^6+\lambda_0^6} &=    \frac{\lambda_0^2}{\lambda_1^2}\tilde{q}(k,t=\circlearrowright) + \frac{\lambda_1^2}{\lambda_0^2}\tilde{q}(k,t=\circlearrowleft) 
\\
&= 
  \frac{\lambda_0^2}{\lambda_1^2} \frac{1}{2} \left( \frac{\lambda_1^6u_k^2+\lambda_0^6v_k^2}{\lambda_1^6+\lambda_0^6} + \sum_{ij} \xi_{ijk} \right) + \frac{\lambda_1^2}{\lambda_0^2} \frac{1}{2} \left( \frac{\lambda_1^6u_k^2+\lambda_0^6v_k^2}{\lambda_1^6+\lambda_0^6} - \sum_{ij} \xi_{ijk} \right)\\
  &= \frac{1}{2}\left(\frac{\lambda_0^2}{\lambda_1^2}+ \frac{\lambda_1^2}{\lambda_0^2}\right)\frac{\lambda_1^6u_k^2+\lambda_0^6v_k^2}{\lambda_1^6+\lambda_0^6} + \frac{1}{2}\left(\frac{\lambda_0^2}{\lambda_1^2}- \frac{\lambda_1^2}{\lambda_0^2}\right)\sum_{ij} \xi_{ijk},
\end{align}
which simplifies to 
\begin{equation}\label{eq: D5}
    \sum_{ij} \xi_{ijk}  = 
    \frac{(\lambda_1^6 \lambda_0^4 - \lambda_1^{10})u_k^2 + (\lambda_0^6 \lambda_1^4 - \lambda_0^{10})v_k^2}
    {(\lambda_0^4-\lambda_1^4)(\lambda_1^6+\lambda_0^6)}=
    \frac{\lambda_1^6u_k^2-\lambda_0^6v_k^2}{\lambda_1^6+\lambda_0^6}.
\end{equation}
Where we used $\lambda_1^2+\lambda_0^2=1$ and the fact that $\lambda_1^2 \not= \lambda_0^2$. Eq.~\eqref{eq: D5} gives two different constraints for $k=0$ and $k=1$. Using $\sum_{ij} \xi_{ij0}=\xi_0+ 2 \xi_1+\xi_2$ and $\sum_{ij} \xi_{ij1}=\xi_1+ 2 \xi_2+\xi_3$ we can put them in the form
\begin{align}\label{eq: d8}
\xi_0&=\frac{\lambda_1^6u^2-\lambda_0^6v^2}{\lambda_1^6+\lambda_0^6} - 2 \xi_1 - \xi_2=u^2-\frac{\lambda_0^6}{\lambda_1^6+\lambda_0^6} - 2 \xi_1 - \xi_2,\\\label{eq: d9}
\xi_3&=\frac{\lambda_1^6 v^2 - \lambda_0^6 u^2}{\lambda_1^6+\lambda_0^6} - \xi_1 - 2 \xi_2 = \frac{\lambda_1^6}{\lambda_1^6+\lambda_0^6} - u^2 -\xi_1 - 2 \xi_2 .
\end{align}

Finally, we use the there positivity constraints that we rewrite with the help of Eq.~(\ref{marg_q_ijk_appendix})
\begin{align}
0 &\leq \tilde{q}(0,0,0,t=\circlearrowleft)= \frac{1}{2}\left(\frac{(\lambda_1^3 u^3+\lambda_0^3 v^3)^2}{(\lambda_1^6+\lambda_0^6)} - \xi_0\right), \\
0 &\leq \tilde{q}(1,1,1,t=\circlearrowright)=  \frac{1}{2}\left(\frac{(\lambda_1^3 v^3-\lambda_0^3 u^3)^2}{(\lambda_1^6+\lambda_0^6)} +\xi_3\right)\\ \label{eq: positivity last}
0&\leq \tilde{q}(0,0,1,t=\circlearrowleft)= \frac{1}{2}\left( \frac{(\lambda_1^3 u^2 v - \lambda_0^3 u v^2)^2}{(\lambda_1^6+\lambda_0^6)} - \xi_1\right).
\end{align}
When combined with Eqs.~(\ref{eq: d8},\ref{eq: d9}) the first two positivity constrains become 
\begin{align}
\xi_2 \geq u^2-\frac{\lambda_0^6 + (\lambda_1^3 u^3+\lambda_0^3 v^3)^2}{ (\lambda_1^6+\lambda_0^6)}  - 2 \xi_1, \quad
\xi_2 \leq \frac{\lambda_1^6 +(\lambda_1^3 v^3-\lambda_0^3 u^3)^2}{2(\lambda_1^6+\lambda_0^6)} - \frac{u^2}{2} - \frac{\xi_1}{2},
\end{align}
and lead to a lower bound on $\xi_1$ when cobined
\begin{align}
\xi_1 \geq u^2 - \frac{2(\lambda_0^6 + (\lambda_1^3 u^3+\lambda_0^3 v^3)^2) + \lambda_1^6 +(\lambda_1^3 v^3-\lambda_0^3 u^3)^2}{3 (\lambda_1^6+\lambda_0^6)}.
\end{align}
In addition, Eq.~\eqref{eq: positivity last} gives an upper bound on $\xi_1$, and two constraints need to hold together. Therefore the upper bound on $\xi$ must exceed the lower bound which implies the following final bound that must be satisfies for the local model to exist
\begin{align}
&\frac{(\lambda_1^3 u^2 v - \lambda_0^3 u v^2)^2}{(\lambda_1^6+\lambda_0^6)}  \geq u^2 - \frac{2(\lambda_0^6 + (\lambda_1^3 u^3+\lambda_0^3 v^3)^2) + \lambda_1^6 +(\lambda_1^3 v^3-\lambda_0^3 u^3)^2}{3 (\lambda_1^6+\lambda_0^6)} \qquad
\implies \\
&3(\lambda_1^3 u^2 v - \lambda_0^3 u v^2)^2- 3u^2 (\lambda_1^6+\lambda_0^6) + 2(\lambda_0^6 + (\lambda_1^3 u^3+\lambda_0^3 v^3)^2) + \lambda_1^6 +(\lambda_1^3 v^3-\lambda_0^3 u^3)^2 \geq 0.\label{ineq}
 \end{align}
This inequality defines the region $u > u_\mathrm{max}(\lambda_1)$ for which the distribution $P_Q$ does not admit a local model, see Fig.~\ref{function}. Note that by symmetry, exchanging $\lambda_1$ with $\lambda_0$ and $u$ with $v$ does not change the distributions, which define another nonlocality region  $u < \sqrt{1-(u_\mathrm{max}(\sqrt{1-\lambda_1^2}))^2}$ also depicted in Fig.~\ref{function}. In this figure, we can see that we recover the same non-locality range of the ref. \cite{Renou_2019} for the coarse grainded RGB4 distribution (except the measure-zero case of $\lambda_1 = \lambda_0 =1/\sqrt{2}$).

\begin{figure}
\centering
\includegraphics[width=0.6\columnwidth]{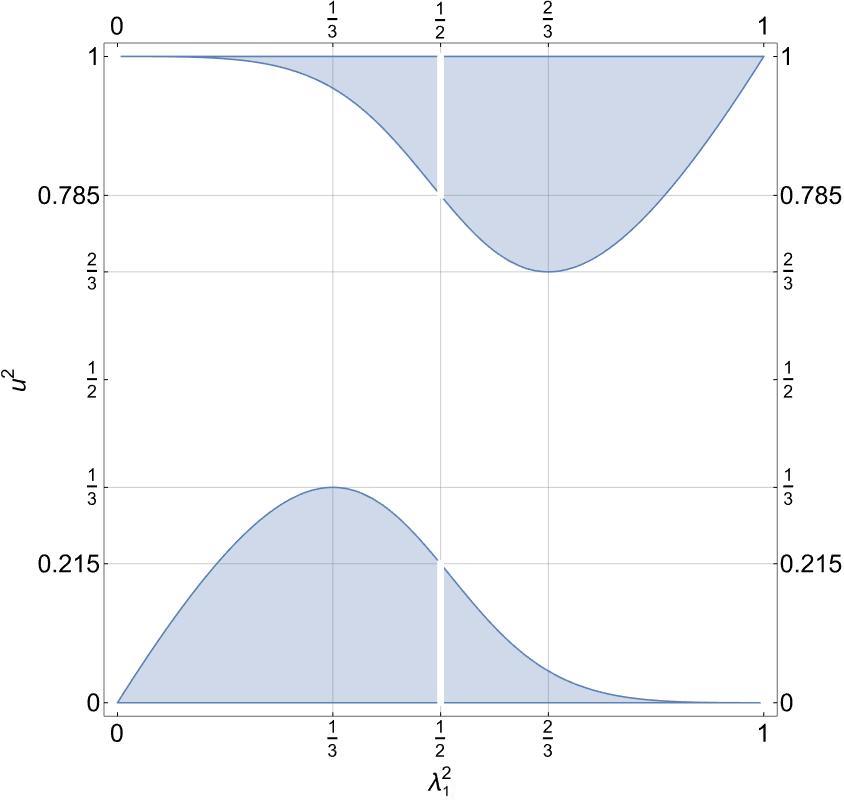}
\caption{The shaded region depicts the regime of parameters $u^2$ and $\lambda_1^2$ for which we prove the distribution $P_Q(a,b,c)$ to be nonlocal. Note that the line $\lambda_1^2=\frac{1}{2}$ is excluded, as our proof does not apply to this case. In the plot we also considered $u\leq v$.
} \label{figure_elisa}
\label{function}
\end{figure}

\section{Proof of non-locality by inflation}
\label{appendix_inflation}
To prove the non-locality of several variations of the coarse grained Fritz distribution and to corroborate the results of the Neural Network a Web inflation of the triangle scenario was used \cite{wolfe2019inflation}. Each source was inflated up to two copies.

\begin{center}
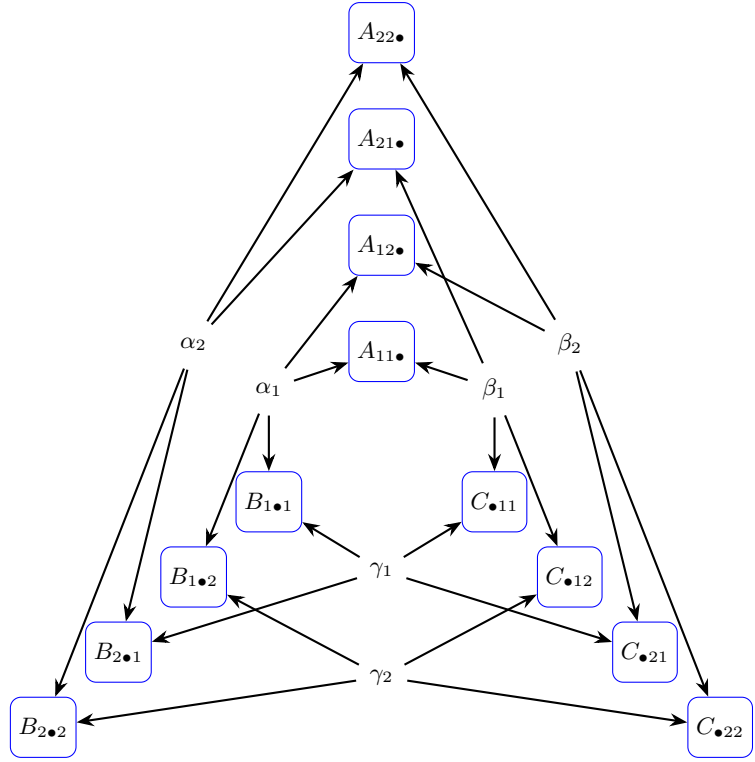
\begin{figure}[h!]
\begin{tikzpicture}
    \node[shape =  rectangle, minimum size=0.8cm, rounded corners ,draw=blue] (A11) at (3,3) {$A_{11 \bullet}$};
    \node[shape =  rectangle, minimum size=0.8cm, rounded corners ,draw=blue] (A12) at (3,3+1.414) {$A_{12 \bullet}$};
    \node[shape =  rectangle, minimum size=0.8cm, rounded corners ,draw=blue] (A21) at (3,3+1.414*2) {$A_{21 \bullet}$};
    \node[shape =  rectangle, minimum size=0.8cm, rounded corners ,draw=blue] (A22) at (3,3+1.414*3) {$A_{22 \bullet}$};
    \node[shape =  rectangle, minimum size=0.8cm, rounded corners ,draw=blue] (C11) at (4.5,1) {$C_{\bullet 11}$};
    \node[shape =  rectangle, minimum size=0.8cm, rounded corners ,draw=blue] (C12) at (4.5+1,1-1) {$C_{\bullet 12}$};
    \node[shape =  rectangle, minimum size=0.8cm, rounded corners ,draw=blue] (C21) at (4.5+1*2,1-1*2) {$C_{\bullet 21}$};
    \node[shape =  rectangle, minimum size=0.8cm, rounded corners ,draw=blue] (C22) at (4.5+1*3,1-1*3) {$C_{\bullet 22}$};
    \node[shape =  rectangle, minimum size=0.8cm, rounded corners ,draw=blue] (B11) at (1.5,1) {$B_{1\bullet 1}$};
    \node[shape =  rectangle, minimum size=0.8cm, rounded corners ,draw=blue] (B12) at (1.5-1,1-1) {$B_{1\bullet 2}$};
    \node[shape =  rectangle, minimum size=0.8cm, rounded corners ,draw=blue] (B21) at (1.5-1*2,1-1*2) {$B_{2\bullet 1}$};
    \node[shape =  rectangle, minimum size=0.8cm, rounded corners ,draw=blue] (B22) at (1.5-1*3,1-1*3) {$B_{2\bullet 2}$};
    \node[shape=circle,draw=white] (U31) at (3,1.5-1.414) {$\gamma_{1}$};
    \node[shape=circle,draw=white] (U32) at (3,1.5-1.414*2) {$\gamma_{2}$};
    \node[shape=circle,draw=white] (U11) at (2.5-1,1.5+1) {$\alpha_{1}$};
    \node[shape=circle,draw=white] (U12) at (2.5-1*2,1.1+1*2) {$\alpha_{2}$};
    \node[shape=circle,draw=white] (U21) at (3.5+1,1.5+1) {$\beta_{1}$} ;
    \node[shape=circle,draw=white] (U22) at (3.5+1*2,1.1+1*2) {$\beta_{2}$} ;
\begin{scope}[>={Stealth[black]},
              every edge/.style={draw=black,thick}]
              
              \path [->] (U11) edge node {} (B11);
              \path [->] (U11) edge node {} (B12);
              \path [->] (U11) edge node {} (A11);
              \path [->] (U11) edge node {} (A12);
              \path [->] (U12) edge node {} (B21);
              \path [->] (U12) edge node {} (B22);
              \path [->] (U12) edge node {} (A21);
              \path [->] (U12) edge node {} (A22);              
              \path [->] (U31) edge node {} (B11);
              \path [->] (U31) edge node {} (B21);
              \path [->] (U31) edge node {} (C11);
              \path [->] (U31) edge node {} (C21);
              \path [->] (U32) edge node {} (B12);
              \path [->] (U32) edge node {} (B22);
              \path [->] (U32) edge node {} (C12);
              \path [->] (U32) edge node {} (C22);
              \path [->] (U21) edge node {} (C11);
              \path [->] (U21) edge node {} (C12);
              \path [->] (U21) edge node {} (A11);
              \path [->] (U21) edge node {} (A21);
              \path [->] (U22) edge node {} (C21);
              \path [->] (U22) edge node {} (C22);
              \path [->] (U22) edge node {} (A12);
              \path [->] (U22) edge node {} (A22);
\end{scope}
\end{tikzpicture}
\caption{\textbf{The Inflated Triangle Scenario} where $\alpha$, $\beta$ and $\gamma$ are the sources and the rest are observable variables. The inflation order is set to be 2. The copies are created such that $A_{\alpha=x,\beta=y,\gamma=z}=A_{xyz}$ where $x,y,z\in\{1,2\}$ and the symbol '$\bullet$' means that the variable in question does not have a dependency on that particular source.}
\end{figure}
\end{center}

The following two maximal expressible sets were used to run the inflation linear program. Due to the factorization of these probabilities, the inequalities obtained from the dual solution would contain the corresponding quadratic and cubic elements.
\begin{align} 
P\left( A_{11 \bullet}A_{22 \bullet} B_{1\bullet 1} B_{2\bullet 2} C_{\bullet 11} C_{\bullet 22}\right)
&=P\left(A_{11 \bullet}B_{1\bullet 1}C_{\bullet 11}\right)P\left(A_{22 \bullet}B_{2\bullet 2} C_{\bullet 22}\right),\label{D:1}\\ 
P\left(A_{12 \bullet}B_{2\bullet 1}C_{\bullet 12}\right)
&=P(A_{12 \bullet})P(B_{2\bullet 1})P(C_{\bullet 12}).\label{D:11}
\end{align}

The inflation linear program showed a varying sensitivity for different combinations of Alice's and Bob's outputs in detecting non-locality. Therefore, in some cases using only \eqref{D:1} was sufficient. Below is a table that presents the results of the inflation linear program for some coarse grained Fritz distributions.
\begin{align}
\label{D:4}
&\begin{aligned}
    &\;\text{\underline{$A$}}\\[-2pt]
    \textbf{0}\,[\!&\begin{array}{l}
         00  \\
         01  \\
    \end{array} \\[-3.5pt]\begin{array}{l}
         \textbf{1}\,\text{\footnotesize[}  \\
         \textbf{2}\,\text{\footnotesize[}
    \end{array}\!\!\! &\begin{array}{c}
          10 \\
          11
    \end{array}
\end{aligned}\;\;\;\;
\begin{aligned}
    &\;\text{\underline{$B$}}\\[-2pt]
    \begin{array}{l}
         \textbf{0}\,\text{\footnotesize[}  \\
         \textbf{1}\,\text{\footnotesize[}
    \end{array}\!\!\! &\begin{array}{c}
          00 \\
          01
    \end{array}\\[-3.5pt]
    \textbf{2}\,[\!&\begin{array}{l}
         10  \\
         11  \\
    \end{array}
\end{aligned}\;\;\;\;
\begin{aligned}
    &\;\text{\underline{$C$}}\\[-2pt]
    \textbf{0}\,[\!&\begin{array}{l}
         00  \\
         01  \\
    \end{array} \\[-3.5pt]\begin{array}{l}
         \textbf{1}\,\text{\footnotesize[}  \\
         \textbf{2}\,\text{\footnotesize[}
    \end{array}\!\!\! &\begin{array}{c}
          10 \\
          11
    \end{array}
\end{aligned}\;\;\;\;
\begin{aligned}\;\;\;\;\;\;\;\;
      &\text{\underline{Noise Tolerance}} \\
       &\\[-0.54pt]
       &\;\;\;\;\;\;\;\, \approx0.87\\[-3.54pt]
       &\\[-2.54pt]
       &\\[-1pt]
\end{aligned}\;\;\;\;\;\;\;\;&
\begin{aligned}
    &\;\text{\underline{$A$}}\\[-2pt]
    \textbf{0}\,[\!&\begin{array}{l}
         00  \\
         01  \\
    \end{array} \\[-3.5pt]\begin{array}{l}
         \textbf{1}\,\text{\footnotesize[}  \\
         \textbf{2}\,\text{\footnotesize[}
    \end{array}\!\!\! &\begin{array}{c}
          10 \\
          11
    \end{array}
\end{aligned}\;\;\;\;
\begin{aligned}
    &\;\text{\underline{$B$}}\\[-2pt]
    \begin{array}{l}
         \textbf{0}\,\text{\footnotesize[}  \\
         \textbf{1}\,\text{\footnotesize[}
    \end{array}\!\!\! &\begin{array}{c}
          00 \\
          01
    \end{array}\\[-3.5pt]
    \textbf{2}\,[\!&\begin{array}{l}
         10  \\
         11  \\
    \end{array}
\end{aligned}\;\;\;\;
\begin{aligned}
    &\;\text{\underline{$C$}}\\[-2pt]
    \textbf{0} \bigg[\hspace{-0.25em}\text{\small-}\!&\begin{array}{l}
         00  \\
         01  \\
         10  \\
    \end{array} \\[-3.5pt] \textbf{1}\text{\small[}\!&\begin{array}{c}
          11 \\
    \end{array}
\end{aligned}\;\;\;\;
\begin{aligned}\;\;\;\;\;\;\;\;
      &\text{\underline{Noise Tolerance}} \\
       &\\[-0.54pt]
       &\;\;\;\;\;\;\;\,\approx 0.87\\[-3.54pt]
       &\\[-2.54pt]
       &\\[-1pt]
\end{aligned}\\[5.5pt]
\label{D:3}
&
\begin{aligned}
    &\;\text{\underline{$A$}}\\[-2pt]
    \textbf{0}\,[\!&\begin{array}{l}
         00  \\
         01  \\
    \end{array} \\[-3.5pt]\begin{array}{l}
         \textbf{1}\,\text{\footnotesize[}  \\
         \textbf{2}\,\text{\footnotesize[}
    \end{array}\!\!\! &\begin{array}{c}
          10 \\
          11
    \end{array}
\end{aligned}\;\;\;\;
\begin{aligned}
    &\;\text{\underline{$B$}}\\[-2pt]
    \textbf{0}\,[\!&\begin{array}{l}
         00  \\
         01  \\
    \end{array} \\[-3.5pt]\begin{array}{l}
         \textbf{1}\,\text{\footnotesize[}  \\
         \textbf{2}\,\text{\footnotesize[}
    \end{array}\!\!\! &\begin{array}{c}
          10 \\
          11
    \end{array}
\end{aligned}\;\;\;\;
\begin{aligned}
    &\;\text{\underline{$C$}}\\[-2pt]
    \textbf{0}\,[\!&\begin{array}{l}
         00  \\
         01  \\
    \end{array} \\[-3.5pt]\begin{array}{l}
         \textbf{1}\,\text{\footnotesize[}  \\
         \textbf{2}\,\text{\footnotesize[}
    \end{array}\!\!\! &\begin{array}{c}
          10 \\
          11
    \end{array}
\end{aligned}\;\;\;\;
\begin{aligned}\;\;\;\;\;\;\;\;
      &\text{\underline{Noise Tolerance}} \\
       &\\[-0.54pt]
       &\;\;\;\;\;\;\;\, \approx0.81\\[-3.54pt]
       &\\[-2.54pt]
       &\\[-1pt]
\end{aligned}
\;\;\;\;\;\;\;\;&
\begin{aligned}
    &\;\text{\underline{$A$}}\\[-2pt]
    \textbf{0}\,[\!&\begin{array}{l}
         00  \\
         01  \\
    \end{array} \\[-3.5pt]\begin{array}{l}
         \textbf{1}\,\text{\footnotesize[}  \\
         \textbf{2}\,\text{\footnotesize[}
    \end{array}\!\!\! &\begin{array}{c}
          10 \\
          11
    \end{array}
\end{aligned}\;\;\;\;
\begin{aligned}
    &\;\text{\underline{$B$}}\\[-2pt]
    \textbf{0}\,[\!&\begin{array}{l}
         00  \\
         01  \\
    \end{array} \\[-3.5pt]\begin{array}{l}
         \textbf{1}\,\text{\footnotesize[}  \\
         \textbf{2}\,\text{\footnotesize[}
    \end{array}\!\!\! &\begin{array}{c}
          10 \\
          11
    \end{array}
\end{aligned}\;\;\;\;
\begin{aligned}
    &\;\text{\underline{$C$}}\\[-2pt]
    \textbf{0} \bigg[\hspace{-0.25em}\text{\small-}\!&\begin{array}{l}
         00  \\
         01  \\
         10  \\
    \end{array} \\[-3.5pt] \textbf{1}\text{\small[}\!&\begin{array}{c}
          11 \\
    \end{array}
\end{aligned}\;\;\;\;
\begin{aligned}\;\;\;\;\;\;\;\;
      &\text{\underline{Noise Tolerance}} \\
       &\\[-0.54pt]
       &\;\;\;\;\;\;\;\, \approx0.97\\[-3.54pt]
       &\\[-2.54pt]
       &\\[-1pt]
\end{aligned}\\[5.5pt]
\label{D:2}
&
\begin{aligned}
    &\;\text{\underline{$A$}}\\[-2pt]
    \textbf{0}\,[\!&\begin{array}{l}
         00  \\
         01  \\
    \end{array} \\[-3.5pt]\begin{array}{l}
         \textbf{1}\,\text{\footnotesize[}  \\
         \textbf{2}\,\text{\footnotesize[}
    \end{array}\!\!\! &\begin{array}{c}
          10 \\
          11
    \end{array}
\end{aligned}\;\;\;\;
\begin{aligned}
    &\;\text{\underline{$B$}}\\[-2pt]
         \textbf{0}\,\text{\footnotesize[}\! &\begin{array}{c}
          00 
    \end{array}\\[-3.5pt]
    \textbf{1}\,[\!&\begin{array}{l}
         01  \\
         10  \\
    \end{array}\\[-3.5pt]
         \textbf{2}\,\text{\footnotesize[}\! &\begin{array}{c}
          11 
         \end{array}
\end{aligned}\;\;\;\;
\begin{aligned}
    &\;\text{\underline{$C$}}\\[-2pt]
    \textbf{0}\,[\!&\begin{array}{l}
         00  \\
         01  \\
    \end{array} \\[-3.5pt]\begin{array}{l}
         \textbf{1}\,\text{\footnotesize[}  \\
         \textbf{2}\,\text{\footnotesize[}
    \end{array}\!\!\! &\begin{array}{c}
          10 \\
          11
    \end{array}
\end{aligned}\;\;\;\;
\begin{aligned}\;\;\;\;\;\;\;\;
      &\text{\underline{Noise Tolerance}} \\
       &\\[-0.54pt]
       &\;\;\;\;\;\;\;\, \approx0.85\\[-3.54pt]
       &\\[-2.54pt]
       &\\[-1pt]
\end{aligned}
\;\;\;\;\;\;\;\;&
\begin{aligned}
    &\;\text{\underline{$A$}}\\[-2pt]
    \textbf{0}\,[\!&\begin{array}{l}
         00  \\
         01  \\
    \end{array} \\[-3.5pt]\begin{array}{l}
         \textbf{1}\,\text{\footnotesize[}  \\
         \textbf{2}\,\text{\footnotesize[}
    \end{array}\!\!\! &\begin{array}{c}
          10 \\
          11
    \end{array}
\end{aligned}\;\;\;\;
\begin{aligned}
    &\;\text{\underline{$B$}}\\[-2pt]
         \textbf{0}\,\text{\footnotesize[}\! &\begin{array}{c}
          00 
    \end{array}\\[-3.5pt]
    \textbf{1}\,[\!&\begin{array}{l}
         01  \\
         10  \\
    \end{array}\\[-3.5pt]
         \textbf{2}\,\text{\footnotesize[}\! &\begin{array}{c}
          11 
         \end{array}
\end{aligned}\;\;\;\;
\begin{aligned}
    &\;\text{\underline{$C$}}\\[-2pt]
    \textbf{0} \bigg[\hspace{-0.25em}\text{\small-}\!&\begin{array}{l}
         00  \\
         01  \\
         10  \\
    \end{array} \\[-3.5pt] \textbf{1}\text{\small[}\!&\begin{array}{c}
          11 \\
    \end{array}
\end{aligned}\;\;\;\;
\begin{aligned}\;\;\;\;\;\;\;\;
      &\text{\underline{Noise Tolerance}} \\
       &\\[-0.54pt]
       &\;\;\;\;\;\;\;\, \approx0.96\\[-3.54pt]
       &\\[-2.54pt]
       &\\[-1pt]
\end{aligned}
\end{align}
Where the \textbf{boldface} numbers designate the new, merged outputs and $A,B,C$ are Alice, Bob and Charlie respectively, and we use the notation $a'x$ for the outputs of A, $b'y$ for B, and $xy$ for C. In order to obtain the noise tolerances in \eqref{D:4} and \eqref{D:3} both expressible sets \eqref{D:1} and \eqref{D:11} were implemented in the linear program however, in \eqref{D:2} only the expressible set in \eqref{D:1} was used.  Additionally, to obtain the noise tolerance for both of the distributions in \eqref{D:3} non-certificate type constraints were implemented in the linear program. These equality constraints are of the form

\begin{align*}
    P\left(A_{11 \bullet}A_{22 \bullet}B_{2\bullet 1}B_{2\bullet 2}C_{\bullet 21}C_{\bullet 22}\right)=&\underbrace{P\left(A_{11 \bullet}\right)}\underbrace{P\left(A_{22 \bullet}B_{2\bullet 1}B_{2\bullet 2}C_{\bullet 21}C_{\bullet 22}\right)}.\\&\;\,\textit{injectable}\;\;\;\;\;\;\;\;\;\;\,\textit{not injectable}
\end{align*}

In total there are 3 such semi-knowable (semi-injectable) sets with the factorized injectable parts \cite{wolfe2019inflation} being $P(B_{1\bullet 1})$ and $P(C_{\bullet 11})$ for the other two semi-knowable sets respectively. Merging Alice's first two outputs, doing the same for Bob's middle two outputs and Charlie's first three outputs as depicted in the right side of \eqref{D:2} the following inequality was obtained from the linear program

\begin{align*}
    P\left(000\right)\left[P\left(020\right) + P\left(101\right) + P\left(111\right) + P\left(120\right) + P\left(201\right) + P\left(211\right) + P\left(220\right)\right] &+ \\P\left(001\right)\left[P\left(020\right) + P\left(021\right) + P\left(120\right) + P\left(121\right) + P\left(220\right) + P\left(221\right)\right] &+
    \\P\left(101\right)\left[P\left(110\right) + P\left(120\right) + P\left(121\right) + P\left(200\right) + P\left(210\right) + P\left(220\right)\right] &+
    \\P\left(010\right)\left[P\left(101\right) + P\left(201\right)\right] + P\left(020\right)\left[P\left(101\right) + P\left(201\right)\right] &+ \\P\left(021\right)\left[P\left(101\right) + P\left(201\right)\right] + P\left(100\right)\left[P\left(101\right) + P\left(111\right)\right] &+ \\P\left(110\right)\left[P\left(210\right) + P\left(220\right)\right] +
    P\left(201\right)\left[P\left(220\right) + P\left(221\right)\right]&+
    \\P\left(120\right)\left[P\left(200\right) + P\left(201\right) + P\left(210\right) + P\left(220\right)\right] &+
     \\P\left(121\right)\left[P\left(200\right) + P\left(201\right) + P\left(210\right) + P\left(220\right)\right] &+
    \\P\left(111\right)\left[P\left(200\right) + P\left(210\right) + P\left(220\right)\right] &+
    \\P\left(200\right)P\left(220\right) -P\left(100\right)P\left(221\right)&\geq0.
\end{align*}

Where $P\left(ABC\right)=P(ijk)$ with $i,j\in\{0,1,2\}$ and $k\in\{0,1\}$. However, since the distribution at the right side of \eqref{D:2} was obtained by merging outcomes in the original 4 output Fritz distribution, many of the probabilities in this inequality are zero. Therefore, the inequality violated by this distribution was in fact

\begin{align*}
    P(000)\left[P(020)+P(120)+P(211)\right]+P(120)\left[P(200)+P(210)\right]+P(110)P(210)-P(100)P(221)\geq0.
\end{align*}

The violation of this inequality by the noiseless distribution \eqref{D:2} is $\approx -1.6\times10^{-3}$. 

\bibliography{main}

\end{document}